\newtheorem{proposition}{Proposition}
\let\MYcaption\@makecaption
\let\@makecaption\MYcaption
\algnewcommand{\LineComment}[1]{\(\triangleright\) #1}
\DeclareSIUnit\frame{frame}
\DeclareSIUnit\PRB{PRB}
\newacronym{SNR}{SNR}{signal-to-noise ratio}
\newacronym{LDPC}{LDPC}{low-density parity-check}
\newacronym{QAM}{QAM}{quadrature amplitude modulation}
\newacronym{QPSK}{QPSK}{quadrature phase-shift keying}
\newacronym{BICM}{BICM}{bit-interleaved coded modulation}
\newacronym{BCE}{BCE}{binary cross-entropy}
\newacronym{LLR}{LLR}{log-likelihood ratio}
\newacronym{BMI}{BMI}{bit-wise mutual information}
\newacronym{KL}{KL}{Kullback–Leibler}
\newacronym{BMD}{BMD}{bit-metric decoding}
\newacronym{BER}{BER}{bit error rate}
\newacronym{NN}{NN}{neural network}
\newacronym{GS}{GS}{geometric shaping}
\newacronym{SIP}{SIP}{superimposed pilot}
\newacronym{iid}{i.i.d.\@}{independent and identically distributed}
\newacronym{SGD}{SGD}{stochastic gradient descent}
\newacronym{wrt}{w.r.t.\@}{with respect to}
\newacronym{MAP}{MAP}{maximum a posteriori}
\newacronym{LMMSE}{LMMSE}{linear minimum mean square error}
\newacronym{AWGN}{AWGN}{additive white Gaussian noise}
\newacronym{RBF}{RBF}{Rayleigh block fading}
\newacronym{OFDM}{OFDM}{orthogonal frequency division multiplexing}
\newacronym{3GPP}{3GPP}{3rd Generation Partnership Project}
\newacronym{5GNR}{5G NR}{5G New Radio}
\newacronym{PRB}{PRB}{physical resource block}
\newacronym{IEDD}{IEDD}{iterative estimation, demapping, and decoding}
\newacronym{CSI}{CSI}{channel state information}
\newacronym{MSE}{MSE}{mean squared error}
\newacronym{BPSK}{BPSK}{binary phase-shift keying}
\newacronym{DC}{DC}{direct current}
\newacronym{PAPR}{PAPR}{peak-to-average power ratio}
\newacronym{DMRS}{DMRS}{demodulation reference signal}
\newacronym{TTI}{TTI}{transmission time interval}
\newacronym{CDF}{CDF}{cumulative distribution function}
\newacronym{RE}{RE}{resource element}
\newacronym{MIMO}{MIMO}{multiple-input multiple-output}
\renewcommand{\vec}[1]{\mathbf{#1}}
\newcommand{\vecs}[1]{\boldsymbol{#1}}
\newcommand{\bv}{\vec{b}}
\newcommand{\hv}{\vec{h}}
\newcommand{\mv}{\vec{m}}
\newcommand{\nv}{\vec{n}}
\newcommand{\pv}{\vec{p}}
\newcommand{\wv}{\vec{w}}
\newcommand{\xv}{\vec{x}}
\newcommand{\yv}{\vec{y}}
\newcommand{\zerov}{\vec{0}}
\newcommand{\Am}{\vec{A}}
\newcommand{\Hm}{\vec{H}}
\newcommand{\Id}{\vec{I}}
\newcommand{\Pm}{\vec{P}}
\newcommand{\Rm}{\vec{R}}
\newcommand{\Sm}{\vec{S}}
\newcommand{\Um}{\vec{U}}
\newcommand{\Wm}{\vec{W}}
\newcommand{\Xm}{\vec{X}}
\newcommand{\Ym}{\vec{Y}}
\newcommand{\Lambdam}{\vecs{\Lambda}}
\newcommand{\Bc}{{\cal B}}
\newcommand{\Cc}{{\cal C}}
\newcommand{\Lc}{{\cal L}}
\newcommand{\Nc}{{\cal N}}
\newcommand{\CC}{\mathbb{C}}
\newcommand{\RR}{\mathbb{R}}
\newcommand{\htp}{^{\mathsf{H}}}
\newcommand{\LB}{\left(}
\newcommand{\RB}{\right)}
\newcommand{\LP}{\left\{}
\newcommand{\RP}{\right\}}
\newcommand{\LSB}{\left[}
\newcommand{\RSB}{\right]}
\renewcommand{\ln}[1]{\mathop{\mathrm{ln}}\LB #1\RB}
\renewcommand{\log}[1]{\mathop{\mathrm{log}_2}\LB #1\RB}
\renewcommand{\exp}[1]{\mathop{\mathrm{exp}}\LB #1\RB}
\newcommand{\EE}{{\mathbb{E}}}
\newcommand{\softmax}{\mathop{\mathrm{softmax}}}
 \newcommand{\argmax}[1]{\underset{#1}{\operatorname{arg}\,\operatorname{max}}\;}
\newcommand\abs[1]{\left| #1 \right|}
\begin{document}

\title{End-to-end Learning for OFDM: From Neural Receivers to Pilotless Communication}

\author{Fay\c{c}al Ait Aoudia, \textit{Member, IEEE,} and Jakob Hoydis, \textit{Senior Member, IEEE}%

\thanks{F. Ait Aoudia is with Nokia Bell Labs, 91620 Nozay, France (faycal.ait\_aoudia@nokia-bell-labs.com).
J. Hoydis is with NVIDIA, 06906 Sophia Antipolis, France (jhoydis@nvidia.com). Work carried out while J. Hoydis was with Nokia Bell Labs.}
}

\maketitle

\begin{abstract}
The benefits of end-to-end learning has been demonstrated over AWGN channels but has not yet been quantified over realistic wireless channel models. This work aims to fill this gap by exploring the gains of end-to-end learning over a frequency- and time-selective fading channel using OFDM. With imperfect channel knowledge at the receiver, the shaping gains observed on AWGN channels vanish. Nonetheless, we identify two other sources of performance improvements. The first comes from a neural network-based receiver operating over a large number of subcarriers and OFDM symbols which allows to reduce the number of orthogonal pilots without loss of BER. The second comes from entirely eliminating orthogonal pilots by jointly learning a neural receiver together with either superimposed pilots (SIPs), combined with conventional QAM, or an optimized constellation. The learned constellation works for a wide range of signal-to-noise ratios, Doppler and delay spreads, has zero mean and does hence not contain any form of SIP. Both schemes achieve the same BER as the pilot-based baseline with \SI{7}{\percent} higher throughput. Thus, we believe that a jointly learned transmitter and receiver are a very interesting component for beyond-5G communication systems which could remove the need and associated overhead for demodulation reference signals.

\begin{IEEEkeywords}
Autoencoder, end-to-end learning, geometric shaping, superimposed pilots, orthogonal frequency division multiplexing, frequency-selective fading, channel estimation
\end{IEEEkeywords}

\end{abstract}

\glsresetall
\section{Introduction}
\label{sec:intro}

End-to-end learning has attracted a lot of attention in recent years and it is considered to be a promising technology for future wireless communication systems~\cite{9078454,8839651}. Its key idea is to implement the transmitter, channel, and receiver as a single \gls{NN}, referred to as an autoencoder, that is trained to achieve the highest possible information rate~\cite{9024567, 9118963}.
Since its first application to wireless communications~\cite{8054694}, end-to-end learning has been extended to other fields including optical wireless~\cite{8819929} and optical fiber~\cite{Karanov:18}. However, most of the literature is either simulation-based on simple channel models, such as \gls{AWGN} or \gls{RBF}, or experimental, but performed in static environments~\cite{9024567,8214233}.
Such setups do not account for the Doppler and delay spread encountered in practical wireless systems that lead to variations of the channel response in both time and frequency. The evaluation of end-to-end learning on more realistic  channel models is  overlooked in the existing literature but critical to bring the technology from theory to practice. 

For the reasons mentioned above, we evaluate in this work the benefits of end-to-end learning for \gls{OFDM}-based communication over a time- and frequency-selective fading channel model. We consider a Kronecker structure for the tempo-spectral correlation, using Jakes' Doppler power spectrum and \gls{3GPP} power delay profiles. This model allows us to accurately and conveniently control the Doppler and delay spreads to evaluate the communication performance in different scenarios.
It was already shown in previous work that end-to-end learning enables significant shaping gains on \gls{AWGN} and static channels~\cite{9118963}.
However, we have observed that these gains vanish on realistic channels with imperfect channel knowledge at the receiver.
Therefore, we need to find other ways to increase performance for such channels.
We focus especially on the potential of end-to-end learning to eliminate the need for orthogonal pilot signals, i.e., pilot signals transmitted on dedicated \glspl{RE}, such as \gls{DMRS} in \gls{5GNR}, by jointly optimizing parts of the transmitter and receiver.
Two approaches are presented to achieve this goal.
The first consists in learning \glspl{SIP} that are linearly combined with the \gls{QAM} modulated baseband symbols on the \gls{OFDM} grid. The second approach is more radical and consists in learning a constellation and associated bit labeling which are used to modulate coded bits on all \glspl{RE}. The learned constellation is forced to have zero mean to avoid an unwanted \gls{DC} offset. As a side-effect, it can also not be interpreted as a constellation with superimposed pilots.
In both approaches, the \gls{NN} used at the receiver has a fully convolutional structure and operates jointly on multiple subcarriers and \gls{OFDM} symbols, as in~\cite{8052521,deeprx}.

For benchmarking, we have implemented two strong baseline receivers which rely on \gls{LMMSE} channel estimation with perfect tempo-spectral covariance matrix knowledge and \gls{IEDD}, respectively.
We have also evaluated the performance gains that can be obtained by leveraging only the neural receiver together with standard \gls{QAM} and pilot patterns from \gls{5GNR}.
Our results show that such a receiver enables significant \gls{BER} improvements in scenarios with high mobility and/or when sparse pilot patterns are used, although the channel code is not at all leveraged as for \gls{IEDD}.
These results concur with the observations made in~\cite{8052521,deeprx} regarding the robustness of a neural receiver with respect to a decreased number of pilots.
As a consequence, an \gls{NN}-based receiver with \gls{QAM} and sparse orthogonal pilot patterns from \gls{5GNR} enables wortwhile throughput gains.
Regarding joint optimization of the transmitter and receiver, our results demonstrate that the two approaches we introduce achieve \glspl{BER} similar to the ones achieved by the \gls{NN}-based receiver with \gls{QAM} and orthogonal pilots for various speeds and delay spreads. As a consequence, end-to-end learning enables additional throughput gains in the range from \SI{4}{\percent} to \SI{8}{\percent}, depending on the scenario, by removing all orthogonal pilots.

\subsection*{Related literature}
End-to-end learning over \gls{OFDM} channels was first considered in~\cite{8445920}, in which the autoencoder is assumed to operate over individual \glspl{RE}.
This limitation prevents the receiver from taking advantage of the tempo-spectral correlation of the \gls{OFDM} channel.
This approach is extended towards the learning of a \gls{PAPR} reduction scheme in~\cite{8240644}.
In~\cite{8663458}, one-bit quantization of the received \gls{OFDM} signal is considered, and joint optimization of a precoder and decoder is performed to enable reconstruction of the transmitted signal.

A handful of studies have focused on neural receivers for \gls{OFDM} systems in the recent years.
In~\cite{8052521}, a neural receiver made of dense layers is optimized to jointly process several \gls{OFDM} symbols, with the first symbol carrying pilots.
It is observed that the neural receiver is more robust than conventional receivers when fewer pilots or no cyclic prefix is transmitted.
A large residual convolutional \gls{NN} is considered in~\cite{deeprx} which operates over a large number of subcarriers and \gls{OFDM} symbols.
\glspl{BER} close to what can be achieved with perfect channel knowledge are observed over realistic \gls{3GPP} channel models.
It is also shown that such a neural receiver is more robust against interference compared to traditional approaches.

The goal of our work is to investigate what additional benefits can be achieved from end-to-end learning when a neural receiver is used.
More specifically, the reduction or complete suppression of pilots is explored in this work.
In~\cite{9360873}, the authors develop an end-to-end learning-based joint source and channel coding scheme that does not require pilots.
However, our approach is based on \gls{GS} and \gls{SIP} to achieve pilotless communication for ubiquitous \gls{BICM}.
There also exists a rich literature on this topic, involving schemes that typically require detection algorithms of prohibitive complexity (see, e.g.,~\cite{8012416} and references therein).

The rest of this manuscript is structured as follows: Section~\ref{sec:chmod_bsl} presents the channel model and the receiver baselines against which the \gls{NN}-based approaches are compared.
Section~\ref{sec:e2e} introduces the neural receiver and end-to-end schemes that enable communication without orthogonal pilots.
Section~\ref{sec:results} presents the simulations results while Section~\ref{sec:conclu} concludes the paper.

\paragraph*{Notations}
Boldface upper-case (lower-case) letters denote matrices (column vectors).
$\RR$ ($\CC$) is the set of real (complex) numbers; 
$j$ is the imaginary unit, $()^*$ the complex conjugate operator.
$\ln{\cdot}$ denotes the natural logarithm and $\log{\cdot}$ the binary logarithm.
$\Cc\Nc(\mv,\Sm)$ is the complex multivariate Gaussian distribution with mean $\mv$ and covariance $\Sm$.
The $(i,k)$ element of a matrix $\Xm$ is denoted by $X_{i,k}$. The $k^{th}$ element of a vector $\xv$ is $x_k$  and $\text{diag}(\xv)$ is a diagonal matrix with $\xv$ as diagonal. The operators $()\htp$ and $\text{vec}()$ denote the Hermitian transpose and vectorization, respectively. 
For two matrices $\Xm$ and $\Ym$, their Hadamard and Kronecker products are denoted by $\Xm \circ \Ym$ and $\Xm \otimes \Ym$, respectively.
Random variables are denoted by capital italic font, e.g., $X$, with realizations $x$.
Random vectors are denoted by capital bold calligraphic font, e.g., $\mathbf{\mathcal{X}}$, with realizations $\xv$.
$I(X;Y)$, $p(y|x)$ and $p(x,y)$ represent respectively the mutual information, conditional probability, and joint probability distribution of $X$ and $Y$.

\section{Channel model and receiver baselines}
\label{sec:chmod_bsl}

We start by introducing the channel model which is used throughout this work. We will then detail two baseline receiver algorithms for performance benchmarking with the machine learning-based approaches which are presented in Section~\ref{sec:e2e}.

\subsection{Channel model}
\label{sec:ch_mod}

We consider an \gls{OFDM} system with $n_S$ subcarriers that operates on \glspl{TTI} (or blocks) of $n_T$ consecutive \gls{OFDM} symbols, forming a frame. After cyclic-prefix removal and discrete Fourier transform at the receiver, the corresponding complex-baseband channel model is
\begin{equation}
\label{eq:ch_tr}
\Ym = \Hm \circ \Xm + \Wm
\end{equation}
where $\Ym\in\CC^{n_S \times n_T}$ is the received signal, $\Hm\in\CC^{n_S \times n_T}$ the channel matrix, $\Xm\in\CC^{n_S \times n_T}$ the matrix of transmitted symbols, and $\Wm\in\CC^{n_S \times n_T}$ additive white complex Gaussian noise with variance $\sigma^2$ per element. Depending on the transmission scheme, some elements of $\Xm$ can be used for pilot symbols while the others carry modulated symbols according to a constellation, e.g., \gls{QAM} or a learned constellation. It is also possible that the elements of $\Xm$ carry modulated symbols together with superimposed pilots, as explained later.
For all schemes, the transmitted symbols are assumed to have an average energy equal to one, i.e., $\EE\LP \abs{X_{i,k}}^2 \RP = 1$.
Let us denote by $\hv=\text{vec}(\Hm)\in\CC^{n}$ the vectorization of $\Hm$, where $n = n_S n_T$.
The vector $\hv$ is sampled from a complex Gaussian distribution with zero mean and covariance matrix $\Rm\in\CC^{n\times n}$.
This assumption is valid if the number of propagation paths is large enough and some conditions on the path gains are satisfied~\cite{7155505}.
The matrix $\Rm$ determines the temporal and spectral correlation of the \gls{OFDM} channel. Both are assumed to be separable, i.e., the matrix $\Rm$ can be written as the Kronecker product of a frequency correlation matrix $\mathbf{R_F}\in\CC^{n_S\times n_S}$ and a time correlation matrix $\mathbf{R_T}\in\CC^{n_T\times n_T}$, according to
\begin{equation}
	\Rm = \mathbf{R_F} \otimes \mathbf{R_T}.
\end{equation}
The received signal power is assumed to be uniformly distributed with respect to the angle of arrival at the receiver.
Such a model corresponds to a Clarke-Jakes power angular spectrum which is a reasonable assumption if the receiver is immersed in the propagation clutter~\cite{heath2018foundations}.
The corresponding time correlation matrix is given by~\cite{doi:10.1002/j.1538-7305.1968.tb00069.x}
\begin{equation}
\label{eq:time_cov}
\LSB\mathbf{R_T}\RSB_{i,k} = J_0 \LB 2\pi\frac{v}{c}f_c\Delta_T(i-k) \RB  ,\quad 1\le i,k\le n_T
\end{equation}
where $J_0(\cdot)$ is the zero-order Bessel function of the first kind, $c$ [\si{\meter\per\second}] the speed of light, $v$ [\si{\meter\per\second}] the receiver speed relative to the transmitter, $f_c$ [\si{\hertz}] the carrier frequency and $\Delta_T$ [\si{\second}] the duration of an \gls{OFDM} symbol (including the cycle prefix).
Regarding the frequency correlation, it can be computed given a delay spread $D_s$ [\si{\second}] and a power delay profile composed of $L$ \emph{normalized} delays $\tau_l$ and corresponding powers $S_l$, for $l=1,\dots,L$, according to \cite{heath2018foundations}
\begin{equation}
	\LSB\mathbf{R_F}\RSB_{i,k} = \sum_{l=1}^L S_l e^{j 2 \pi \tau_l Ds \Delta_F (i-k)} ,\quad 1\le i,k\le n_S
\end{equation}
where $\Delta_F$ [\si{\hertz}] is the subcarrier spacing.

Generating channel realizations $\hv$ can be done by sampling from $\Cc\Nc(\zerov, \Rm)$.
This can be achieved by filtering a non-correlated Gaussian vector $\nv \sim \Cc\Nc(\zerov, \Id_n)$ as
\begin{equation}
\hv = \Um\Lambdam^{\frac12} \nv
\end{equation}
where $\Rm = \Um\Lambdam\Um\htp$ is the eigenvalue decomposition of $\Rm$.

\begin{figure}
 	\centering
 	\begin{subfigure}{\linewidth}
 		\centering
		\includegraphics[scale=0.9]{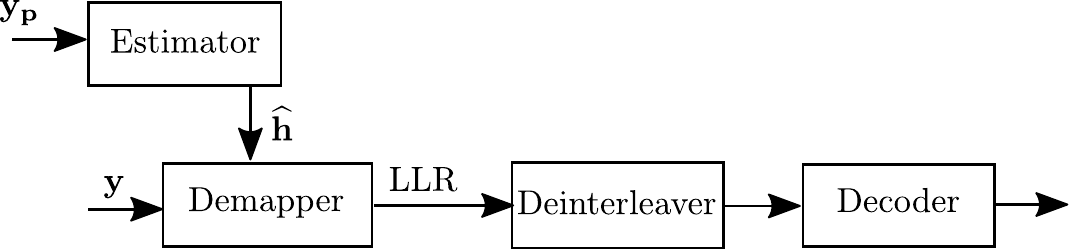}
		\caption{Non-iterative receiver\label{fig:trbsl}}
	\end{subfigure}
	
    \vspace{1.0\baselineskip}
 	\begin{subfigure}{\linewidth}
 		\centering
		\includegraphics[scale=0.9]{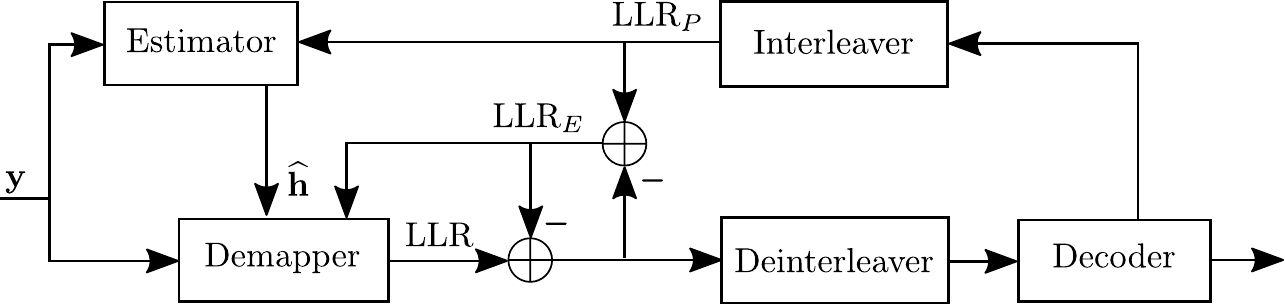}
		\caption{Iterative receiver\label{fig:iedd}}
	\end{subfigure}
	\caption{Considered baselines for the receiver}
\end{figure}

Two receiver baselines are considered in this work.
The first one performs \gls{LMMSE} channel estimation based on transmitted pilots, followed by soft-demapping assuming Gaussian noise, as illustrated in Fig.~\ref{fig:trbsl}.
The second baseline also relies on \gls{LMMSE} channel estimation and Gaussian demapping, but leverages \gls{IEDD}~\cite{1230121}.
These two baselines are detailed in the rest of this section.
They are respectively referred to as the \emph{non-iterative} and \emph{iterative} baselines.

\subsection{LMMSE channel estimation and Gaussian demapping}
\label{sec:non_it}

We denote by $\Pm\in\CC^{n_S \times n_T}$ the pilot matrix, whose entry $P_{i,k}$ is zero if the \gls{RE} on the $i^{th}$ subcarrier and $k^{th}$ time slot is carrying data, or equal to the pilot value otherwise.
Let $n_P$ be the number of pilot-carrying \glspl{RE} which are considered for channel estimation.
For these \glspl{RE}, we can re-write the channel transfer function \eqref{eq:ch_tr} in vectorized form as
\begin{equation}
\mathbf{y_P} = \mathbf{\Pi} \LB \text{diag}(\pv) \hv + \wv\RB
\end{equation}
where $\pv =\text{vec}(\Pm)$, $\wv = \text{vec}(\Wm)$, and $\mathbf{\Pi}$ is the $n_P \times n$ matrix which selects only the elements carrying pilot symbols.
Assuming knowledge of the channel correlation matrix $\Rm$ at the receiver, the \gls{LMMSE} channel estimate is (e.g.,~\cite[Lemma B.17]{massivemimobook}) 
\begin{equation}
\label{eq:lmmse_est}
\widehat{\hv} = \Rm \text{diag}(\pv)\htp \mathbf{\Pi}\htp \LB \mathbf{\Pi} \LB \text{diag}(\pv) \Rm \text{diag}(\pv)\htp + \sigma^2 \Id_n \RB \mathbf{\Pi}\htp \RB^{-1} \mathbf{y_p}.
\end{equation}
Using this result, the received signal in \eqref{eq:ch_tr}, which incorporates both pilots and data, can be re-written in vector form as
\begin{equation}
\yv = \text{diag}(\widehat{\hv})\xv + \underbrace{\text{diag}(\widetilde{\hv})\xv + \wv}_{ \coloneqq \widetilde{\wv}}
\end{equation}
where $\widetilde{\hv} = \hv - \widehat{\hv}$ is the channel estimation error with correlation matrix $\widetilde{\Rm}\in\CC^{n\times n}$, given as
\begin{equation}
\label{eq:est_err}
\widetilde{\Rm} = \EE \LP \widetilde{\hv}\widetilde{\hv}\htp \RP = \Rm - \Rm \text{diag}(\pv)\htp \mathbf{\Pi}\htp \LB \mathbf{\Pi} \LB \text{diag}(\pv) \Rm \text{diag}(\pv)\htp + \sigma^2 \Id_n \RB \mathbf{\Pi}\htp \RB^{-1} \mathbf{\Pi} \text{diag}(\pv)\Rm
\end{equation}
and $\widetilde{\wv}$ is the sum of noise and residual interference due to imperfect channel estimation.

Soft-demapping is performed assuming that $\widetilde{\wv}$ is Gaussian.\footnote{This is typically not true as $\text{diag}(\widetilde{\hv})\xv$ is not Gaussian.}
Let us denote by $m$ the number of bits per channel use, by $\Cc = \{c_1,\dots,c_{2^m}\}$ the constellation, and by $\Cc_{i,0}$ ($\Cc_{i,1}$) the subset of $\Cc$ which contains all constellation points with the $i^{th}$ bit label set to 0 (1).
The \gls{LLR} for the $i^{th}$ bit ($i \in \{1,\dots,m\}$) of the $k^{th}$ \gls{RE} ($k \in \{1,\dots,n\}$) is computed as follows:
\begin{equation}
\text{LLR}(k,i) = \ln{\frac{\sum_{c \in \Cc_{i,1}} \exp{-\frac{1}{\widetilde{\sigma}_k^2} \abs{y_k - \widehat{h}_k c}^2}}{\sum_{c \in \Cc_{i,0}} \exp{-\frac{1}{\widetilde{\sigma}_k^2} \abs{y_k - \widehat{h}_kc}^2}}}
\end{equation}
where $\widetilde{\sigma}_k^2 = \EE \LP \widetilde{w}_k\widetilde{w}_k^* \RP = \widetilde{R}_{k,k} + \sigma^2$  for $k=1,\dots, n$. Whenever $k$ corresponds to the index of a pilot symbol, no \gls{LLR} value is computed. After deinterleaving, the \glspl{LLR} are fed to a channel decoding algorithm (e.g., belief propagation) which computes predictions of the transmitted bits.

\subsection{Iterative channel estimation, demapping and decoding}
\label{sec:iedd}

The key idea of \gls{IEDD} is to leverage the channel code to improve channel estimation and demapping.
Instead of running the estimator, demapper, and decoder once, as illustrated in Fig.~\ref{fig:trbsl}, \gls{IEDD} consists of running the estimator, demapper, and a few iterations of the decoder in an iterative manner, as illustrated in Fig.~\ref{fig:iedd}.
At each iteration, the \glspl{LLR} generated by the decoder are used as prior knowledge on the transmitted bits by the estimator and demapper.
Formally, at each iteration, prior information is assumed to be available to the channel estimator for each bit $i \in \{1,\dots,m\}$ of each \gls{RE} $k \in \{1,\dots,n\}$ in the form of \glspl{LLR} denoted by $\text{LLR}_P(k,i)$.
From these \glspl{LLR}, a prior distribution $P_{X_k}$ is approximated for each transmitted data symbol $x_k\in\Cc$ as follows:
\begin{equation}
\label{eq:iedd_prs}
\LSB P_{X_k}(c_1),\dots,P_{X_k}(c_{2^m}) \RSB \approx \text{softmax}\LB \sum_{i=1}^m c^{(i)}_1 \text{LLR}_P(k,i), \dots, \sum_{i=1}^m c^{(i)}_{2^m} \text{LLR}_P(k,i) \RB
\end{equation}
where $c^{(i)}_u\in\{0,1\}$ refers to the value of the $i^{th}$ bit label of the constellation symbol $c_u$, and the softmax function is defined as $\softmax(l_1,\dots,l_{2^m}) = \LSB \frac{\exp{l_1}}{\sum_{i=1}^{2^m} \exp{l_i}},\dots,\frac{\exp{l_{2^m}}}{\sum_{i=1}^{2^m} \exp{l_i}} \RSB$.
A derivation of all the equations related to the \gls{IEDD} baseline can be found in Appendix\ref{app:iedd_der}.
The prior distribution is over the constellation set $\Cc$ from which $x_k$ takes its value.
At the first iteration, no prior information is available and the prior \glspl{LLR} are therefore set to zero.
Regarding the pilot symbols, the prior distribution always corresponds to the deterministic distribution with probability one for the pilot symbol and zero for any other symbol, i.e., if the symbol $p \in \CC$ is transmitted as a reference signal on the $k^{th}$ \gls{RE}, then $P_{X_k}(X_k = p) = 1$ and $P_{X_k}(X_k \neq p) = 0$.

Given the prior distributions $P_{X_k}$, the \gls{LMMSE} channel estimate is
\begin{equation}
\label{eq:iedd_est}
\widehat{\hv}' = \Rm\text{diag}\LB\bar{\xv}\RB\htp \LB \Rm \circ \EE \LP \xv \xv\htp \RP  + \sigma^2\Id \RB^{-1}\yv
\end{equation}
where $\bar{x}_k \coloneqq \EE_{P_{X_k}}\LP x_k \RP$ and
\begin{equation}
\EE \LP \xv \xv\htp \RP_{i,k} =
\begin{cases}
\EE_{P_{X_k}} \LP |x_k|^2 \RP & \text{if } i = k\\
\EE_{P_{X_i}} \LP x_i \RP \EE_{P_{X_k}} \LP x_k^* \RP & \text{otherwise,}
\end{cases}
\end{equation}
which assumes that the symbols forming $\xv$ are not correlated.\footnote{This might not be true in practice as the channel code introduces redundancy.}
Equation \eqref{eq:iedd_est} is a form of \emph{data-aided channel estimation} as prior information on the data symbols is used in addition to the pilot symbols to perform channel estimation.
The correlation matrix of the estimation error is
\begin{equation}
\label{eq:est_err_iedd}
\widetilde{\Rm}' = \Rm - \Rm\text{diag}\LB\bar{\xv}\RB\htp \LB \Rm \circ \EE \LP \xv \xv\htp \RP  + \sigma^2\Id \RB^{-1} \text{diag}\LB\bar{\xv}\RB\Rm.
\end{equation}
Demapping leverages the \emph{extrinsic information} generated by the decoder, which can be intuitively seen as the additional information generated by the decoder only, and is obtained by subtracting its input from its output~\cite{1230121}, as illustrated in Fig.~\ref{fig:iedd}.
The extrinsic information is available in the form of \glspl{LLR}, which is denoted by $\text{LLR}_E(k,i)$ for the $i^{th}$ bit transmitted in the $k^{th}$ \gls{RE}.
The \gls{LLR} for the $i^{th}$ bit of the $k^{th}$ \gls{RE} computed by the demapper is
\begin{equation}
\label{eq:iedd_llr}
\text{LLR}(k,i) = \ln{\frac{\sum_{c \in \Cc_{i,1}} \exp{-\frac{1}{\widetilde{\sigma_k}'^2} \abs{y_k - \widehat{h}_k c}^2 + \sum_{l=1}^m c^{(l)}\text{LLR}_E(k,l)}}{\sum_{c \in \Cc_{i,0}} \exp{-\frac{1}{\widetilde{\sigma_k}'^2} \abs{y_k - \widehat{h}_kc}^2 + \sum_{l=1}^m c^{(l)}\text{LLR}_E(k,l)}}}
\end{equation}
where $\widetilde{\sigma_k}'^2 = \widetilde{R'}_{k,k} + \sigma^2$, for $k=1,\dots, n$.
As shown in Fig.~\ref{fig:iedd}, only the extrinsic information generated by the demapper is forwarded to the decoder, which is obtained by subtracting from the \glspl{LLR}~(\ref{eq:iedd_llr}) the prior information $\text{LLR}_P$.

\section{End-to-end learning for OFDM}
\label{sec:e2e}

\begin{figure}
 	\centering
	\includegraphics[scale=0.8]{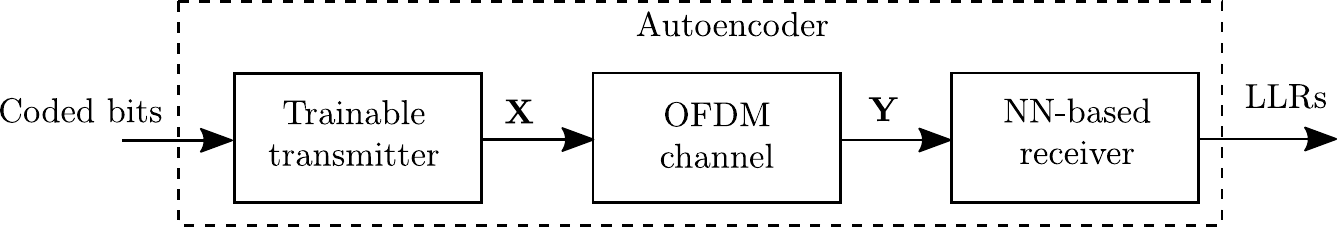}
	\caption{Autoencoder-based communication system\label{fig:e2e_ovw}}
\end{figure}

End-to-end learning of communication systems~\cite{8054694} consists in implementing a transmitter, channel, and receiver as a single \gls{NN} referred to as an autoencoder, as illustrated in Fig.~\ref{fig:e2e_ovw}, and jointly optimizing the trainable parameters of the transmitter and receiver for a specific channel model.
In our setting, training aims at minimizing the total \gls{BCE} [\si{\bit\per\frame}], defined as
\begin{equation}
\label{eq:bce}
\Lc \coloneqq -\sum_{k \in \Nc_D}\sum_{i=1}^m \EE_{b_{k,i}, \yv} \LP \log{Q_{k,i}\LB b_{k,i}|\yv \RB }\RP
\end{equation}
where $\Nc_D$ is the set of size $n_D$ of indexes of \glspl{RE} carrying data symbols, $b_{k,i}$ is the $i^{th}$ bit transmitted in the $k^{th}$ resource element, and $Q_{k,i}\LB \cdot | \yv \RB$ is the receiver estimate of the posterior distribution on the $i^{th}$ bit transmitted in the $k^{th}$ \gls{RE} given the channel output $\yv$.
Since \eqref{eq:bce} is numerically difficult to compute, it is estimated through Monte Carlo sampling as
\begin{equation}
\label{eq:bce_mc}
\Lc \approx -\frac{1}{S} \sum_{l=1}^S \sum_{k \in \Nc_D}\sum_{i=1}^m \LP \log{Q_{k,i}\LB b_{k,i}^{[l]}|\yv^{[l]} \RB }\RP
\end{equation}
where $S$ is the batch size, i.e., the number of samples used to estimate $\Lc$, and the superscript $[l]$ is used to refer to the $l^{th}$ sample forming a batch.
The total \gls{BCE}~(\ref{eq:bce}) can be rewritten as
\begin{equation}
\Lc = n_Dm - R
\end{equation}
with $R$ defined as
\begin{equation}
\label{eq:rate}
R \coloneqq \sum_{k \in \Nc_D}\sum_{i=1}^m I(B_{k,i};\mathbf{\mathcal{Y}}) - \sum_{k \in \Nc_D}\sum_{i=1}^m \EE_{\yv}\LP \text{D}_{\text{KL}}\LB P_{B_{k,i}}\LB \cdot|\yv \RB || Q_{k,i}\LB \cdot | \yv \RB  \RB \RP
\end{equation}
where $B_{k,i}$ is the random variable corresponding to the $i^{th}$ transmitted bit in the $k^{th}$ \gls{RE}, $\mathbf{\mathcal{Y}}$ is the random variable corresponding to the received symbols, $\text{D}_{\text{KL}} \LB \cdot || \cdot \RB$ is the \gls{KL} divergence, and $P_{B_{k,i}}\LB \cdot|\yv \RB$ is the \emph{true} posterior distribution on the transmitted bits given the channel output $\yv$.
The first term in~(\ref{eq:rate}) is the maximum information rate that can be achieved assuming an ideal \gls{BMD} receiver and depends only on the transmitter and channel.
The second term in~(\ref{eq:rate}) is the rate loss caused by an imperfect receiver, which equals zero if the receiver computes the true posterior distribution of the transmitted bits.
Interestingly, in accordance to the following proposition, $R$ is an achievable rate assuming a mismatched \gls{BMD} receiver is used.
Therefore, by minimizing $\Lc$, one actually maximizes the achievable rate $R$.
\begin{proposition} \label{prop:ach}
$R$ is an achievable rate in the sense of the standard definition~\cite[§8.5]{cover1999elements}.
\end{proposition}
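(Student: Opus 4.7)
The plan is to recognize $R$ as the generalized mutual information (GMI) of a mismatched \gls{BMD} that uses the per-bit posteriors $Q_{k,i}$ as decoding metrics, and then invoke the classical GMI achievability theorem.

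First, I would apply the cross-entropy identity
\begin{equation*}
-\EE\LSB \log{Q\LB B|\mathbf{\mathcal{Y}}\RB} \RSB = H\LB B|\mathbf{\mathcal{Y}} \RB + \EE_{\yv}\LSB \text{D}_{\text{KL}}\LB P_{B|\yv} \| Q\LB\cdot|\yv\RB \RB \RSB
\end{equation*}
to each bit $B_{k,i}$. Since coded bits are assumed uniform so that $H(B_{k,i})=1$, the summand in~\eqref{eq:rate} simplifies to $1 + \EE\LSB \log{Q_{k,i}\LB B_{k,i}|\mathbf{\mathcal{Y}}\RB} \RSB$. Summing over $(k,i)\in\Nc_D\times\{1,\dots,m\}$ reproduces the compact form $R = n_D m - \Lc$ already given in the excerpt, and expresses $R$ purely in terms of the receiver's mismatched posteriors.

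Next, I would identify this per-bit term with the GMI at parameter $s=1$ of a mismatched BMD whose decoding metric is $q_{k,i}(b|\yv) = Q_{k,i}(b|\yv)$. Because $Q_{k,i}\LB\cdot|\yv\RB$ sums to one over $b$ and $P_{B_{k,i}}(b)=1/2$, the standard GMI
\begin{equation*}
I_{\text{GMI}}^{(k,i)}(s) = \EE\LSB \log{\frac{q_{k,i}\LB B_{k,i}|\mathbf{\mathcal{Y}}\RB^{s}}{\sum_{b\in\{0,1\}} P_{B_{k,i}}(b)\, q_{k,i}\LB b|\mathbf{\mathcal{Y}}\RB^{s}}} \RSB
\end{equation*}
collapses at $s=1$ to exactly $1 + \EE\LSB \log{Q_{k,i}\LB B_{k,i}|\mathbf{\mathcal{Y}}\RB}\RSB$. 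Invoking the classical mismatched-BMD achievability theorem---obtained via Gallager-style random coding by Mart\'inez, Guill\'en i F\`abregas, Caire and Willems and revisited in the probabilistic-shaping literature---then guarantees that $\sum_{k,i} I_{\text{GMI}}^{(k,i)}(1) = R$ is achievable in the sense of~\cite[\S8.5]{cover1999elements}.

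The main obstacle is that the classical GMI theorem is stated for memoryless bit-channels, whereas our block of $n=n_S n_T$ \glspl{RE} is correlated through $\Rm$ and the posteriors $Q_{k,i}$ depend jointly on the whole vector $\yv$. I would handle this by coding across a long sequence of independent frames, each generated by an i.i.d.\ realization of $\Hm\sim\Cc\Nc(\zerov,\Rm)$, so that each frame becomes a single memoryless super-symbol with per-symbol GMI equal to $R$; the standard random-coding bound then delivers vanishing error probability for any rate strictly below $R$. A secondary subtlety is the uniform-bit assumption $P_{B_{k,i}}(b)=1/2$, which in the \gls{BICM} setting is justified by a sufficiently long random interleaver over a linear code.
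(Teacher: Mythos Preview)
Your proposal is correct and reaches the same conclusion as the paper, but by a different route. You recognize $R$ as the generalized mutual information at $s=1$ of a mismatched \gls{BMD} and invoke the Mart\'inez--Guill\'en i F\`abregas--Caire--Willems achievability result as a black box; the paper instead re-derives that very result from scratch via an explicit random-coding argument (union bound followed by Markov's inequality on the product metric, then a law-of-large-numbers step), following essentially the same chain of inequalities that underlies the theorem you cite. Your approach is more economical; the paper's is self-contained and does not require the reader to match the present setting to the hypotheses of an external theorem.

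The one substantive difference is how the two handle temporal correlation across frames. You assume i.i.d.\ frame channels $\Hm\sim\Cc\Nc(\zerov,\Rm)$ and treat each frame as a memoryless super-symbol, which suffices for the block-fading reading of Section~\ref{sec:ch_mod}. The paper is slightly more general: it allows the channel process to remain correlated across frames and obtains the needed empirical-average convergence by invoking ergodicity of the underlying stationary Gaussian process (the time covariance $J_0(\cdot)$ in~\eqref{eq:time_cov} vanishes at large lags, so the mean ergodic theorem applies). Under your i.i.d.\ assumption the two arguments coincide, but the paper's ergodicity step also covers the physically realistic case where consecutive frames see correlated fading. A minor point: the uniform-bit assumption is not really justified by an interleaver---in both arguments it comes directly from the i.i.d.\ random-coding ensemble.
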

\begin{proof}
See Appendix\ref{app:rate_proof}.
\end{proof}

Training of the transmitter typically involves joint optimization of the constellation geometry and bit labeling~\cite{9118963}.
Such an approach is adopted in this work and extended to \gls{OFDM} channels in Section~\ref{sec:e2e_gs}.
Moreover, a centered constellation is learned, assuming that no orthogonal pilots are transmitted.
This avoids an unwanted \gls{DC} offset and removes the throughput loss due to the transmission of reference signals that carry no data.
Motivated by previous observations that training of end-to-end systems over fading channels leads to the learning of constellations that embed \glspl{SIP}~\cite{8054694,8792076}, we also consider the learning of an \gls{SIP} scheme based on conventional \gls{QAM} in Section~\ref{sec:e2e_sip}.
The architecture of the \gls{NN}-based receiver is detailed in Section~\ref{sec:nn_arch}.

\subsection{Learning of geometric shaping and bit-labeling}
\label{sec:e2e_gs}

\begin{figure}
 	\centering
	\includegraphics[scale=0.8]{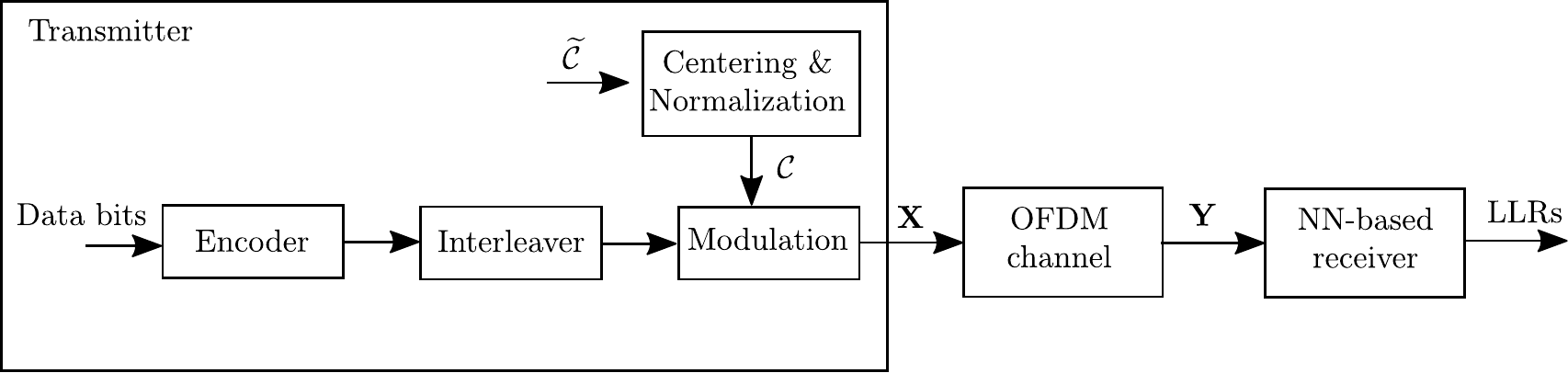}
	\caption{End-to-end system with trainable constellation and receiver\label{fig:gs}}
\end{figure}

The system architecture we have adopted in this work is depicted in Fig.~\ref{fig:gs}.
On the transmitter side, the trainable parameters consist of a set denoted by $\widetilde{\Cc}$ of $2^m$ complex numbers corresponding to the constellation points.
The constellation used for transmitting data is obtained by centering and normalizing $\widetilde{\Cc}$, i.e.,
\begin{equation}
\label{eq:const}
\Cc = \frac{\widetilde{\Cc} - \frac{1}{2^m}\sum_{c \in \widetilde{\Cc}}c}{\sqrt{\frac{1}{2^m}\sum_{c \in \widetilde{\Cc}}\abs{c}^2 - \abs{\frac{1}{2^m}\sum_{c \in \widetilde{\Cc}}c}^2}}.
\end{equation}
Normalization of the constellation ensures it has unit average power, while  centering forces the constellation to have zero mean and therefore avoids an undesired \gls{DC} offset.
Centering the constellation as done in~(\ref{eq:const}) prevents learning of embedded \glspl{SIP}.
As no orthogonal pilots are transmitted, the receiver can only exploit the constellation geometry to reconstruct the transmitted bits.
Compared to previous work such as~\cite{9118963}, a single constellation is learned, which is used for all \glspl{SNR}, Doppler, and delay spreads.
This constellation is learned offline, prior to deployment, and used in place of conventional geometries, such as \gls{QAM}.
Therefore, this approach leads to no additional complexity on the transmitter side over traditional systems.
On the receiver side, an \gls{NN} that operates on multiple subcarriers and \gls{OFDM} symbols is leveraged, whose architecture is detailed in Section~\ref{sec:nn_arch}.
If using an \gls{NN} to perform detection might seem prohibitive because of the apparent incurred complexity, such algorithms highly benefit from hardware acceleration, which might lead to drastic energy and latency reduction~\cite{10.1145/3361682}.
As in~\cite{9118963}, training of the end-to-end systems is done on the total \gls{BCE} estimated by~(\ref{eq:bce_mc}).

\subsection{Learning of superimposed pilots}
\label{sec:e2e_sip}

Typical communication systems rely on pilot patterns which describe how pilots and data symbols are scattered over the \glspl{RE} within a frame.
Such pilots are said to be orthogonal as each \gls{RE} is either carrying a modulated data symbol or a reference signal.
\glspl{SIP} adopt a different approach by splitting the energy available for each \gls{RE} between a reference signal and a data symbol.
Therefore, compared to orthogonal pilots, all \glspl{RE} are used to transmit modulated data symbols.
Formally, the transmitted symbol on the $i^{th}$ subcarrier and on the $k^{th}$ time slot is
\begin{equation}
\label{eq:sip_elem}
X_{i,k} = \underbrace{\sqrt{1-A_{i,k}}\widetilde{X}_{i,k}}_{\text{Data}} + \underbrace{\sqrt{A_{i,k}}P_{i,k}}_{\text{Pilot}}
\end{equation}
where $\widetilde{X}_{i,k}$ is the modulated data symbol that takes values in a constellation set $\Cc$ with zero mean, $A_{i,k} \in [0,1]$ controls the fraction of energy allocated to the reference signal (the rest being allocated to the data symbol), and $P_{i,k}$ is a predefined reference signal.
Equation~(\ref{eq:sip_elem}) can be written in matrix form as
\begin{equation}
\Xm = \sqrt{\mathbf{1}-\Am} \circ \widetilde{\Xm} + \sqrt{\Am}\circ\Pm
\end{equation}
where $\mathbf{1}$ is the matrix full of ones and the square root is taken elementwisely.

We now propose to jointly optimize the pilot allocation matrix $\Am$ and the \gls{NN}-based receiver to maximize the achievable rate~(\ref{eq:rate}).
Our approach differs from previous work on deep learning of pilot patterns~\cite{pp1,pp2,9037126} in several ways.
Firstly, we focus on \glspl{SIP}, whereas prior art focuses on orthogonal pilots.
Secondly, the referenced papers consider learning of a channel estimator to reduce the \gls{MSE} between the true channel and the estimate.
In this work, we train on the achievable rate~(\ref{eq:rate}) as this is the metric of interest when optimizing communication systems from end-to-end.
That way, the system is optimized to learn the right amount of energy which should be allocated to pilots, whereas previous approaches required a pruning step and  regularization as training on the \gls{MSE} leads to most of the energy being allocated to pilots, as the system is not inclined to free resources for data transmission.

\begin{figure}
 	\centering
	\includegraphics[scale=0.8]{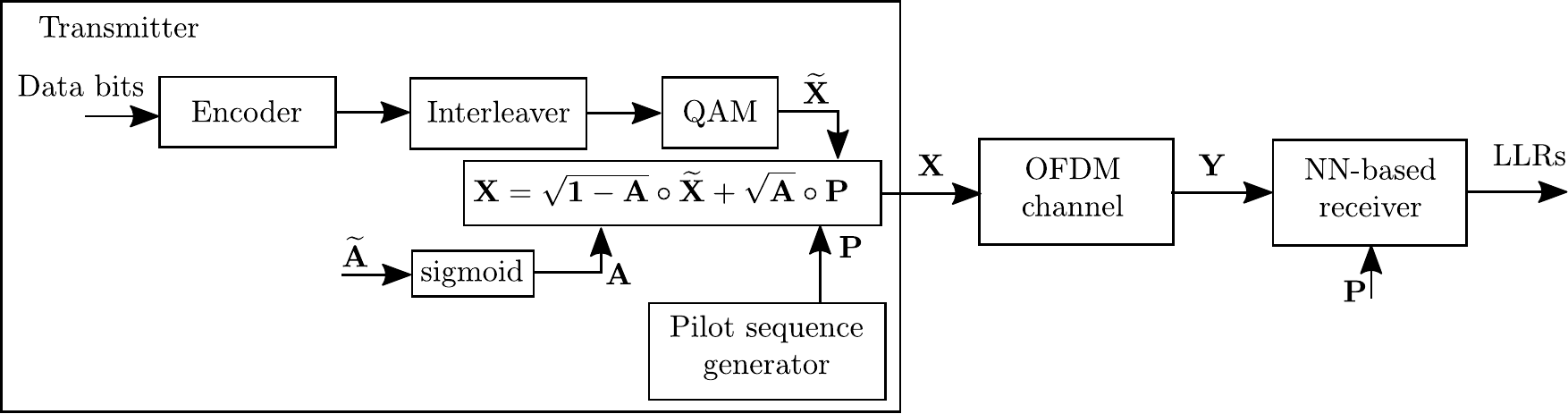}
	\caption{QAM with superimposed pilots\label{fig:qam_sip}}
\end{figure}

The considered system is shown in Fig.~\ref{fig:qam_sip}.
The pilot allocation matrix $\Am$ is obtained by elementwisely taking the sigmoid of an unconstrained matrix $\widetilde{\Am}$, whose coefficients are optimized at training.
To ensure that the transmitted signal has zero average energy, the pilot matrix $\Pm$ is formed by randomly sampling a \gls{BPSK} constellation according to a pseudo-random sequence with zero mean.
As shown in Fig.~\ref{fig:qam_sip}, the so-generated pilot matrix is fed as an additional input to the \gls{NN}-based receiver, whose architecture is detailed thereafter.

\subsection{Receiver architecture}
\label{sec:nn_arch}

\begin{figure}
 	\centering
 	\begin{subfigure}{\linewidth}
 		\centering
		\includegraphics[scale=0.9]{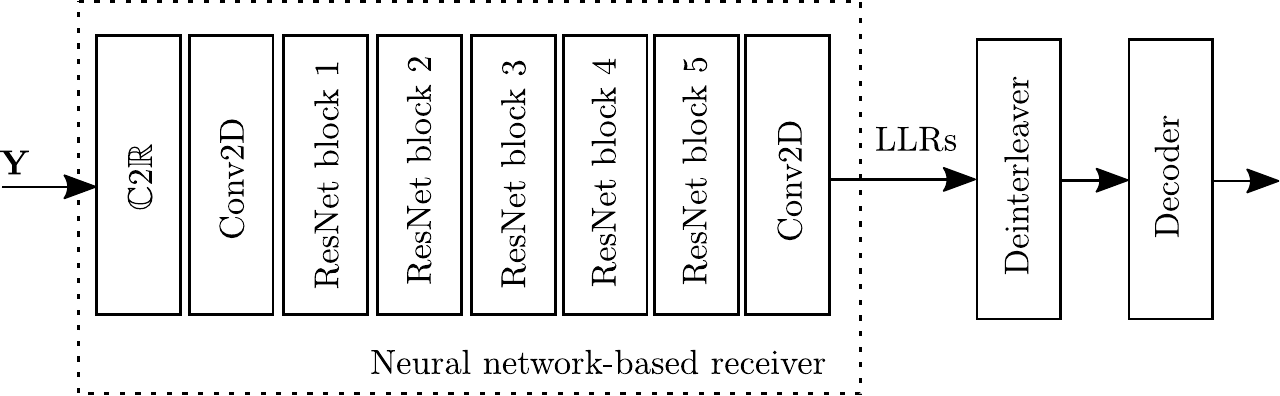}
		\caption{Receiver architecture}
	\end{subfigure}
	
    \vspace{1.0\baselineskip}
 	\begin{subfigure}{\linewidth}
 		\centering
		\includegraphics[scale=0.9]{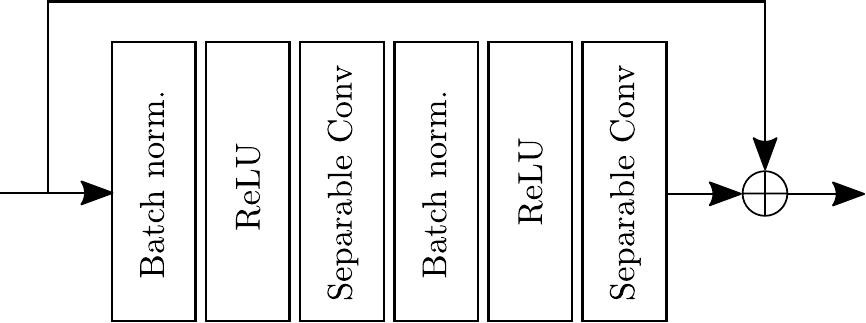}
		\caption{ResNet block\label{fig:resnet}}
	\end{subfigure}
	\caption{Architecture of the \gls{NN}-based receiver\label{fig:nn_rx}}
\end{figure}

The architecture of the \gls{NN} implementing the receiver is shown in Fig.~\ref{fig:nn_rx}.
It is a convolutional residual \gls{NN}~\cite{He_2016_CVPR} that takes as input the received baseband channel samples $\Ym$ of dimension $n_S \times n_T$, and outputs a 3-dimensional tensor of \glspl{LLR} of dimension $n_S \times n_T \times m$ that is fed to the channel decoder after a deinterleaving step.
The \gls{NN} hence substitutes the estimator and demapper shown in Fig.~\ref{fig:trbsl}.
The first layer $\CC2\RR$ converts the complex-valued input tensor of dimension $n_S \times n_T$ into a real-valued 3-dimensional tensor of dimension $n_S \times n_T \times 2$ by stacking the real and imaginary parts into an additional dimension.
No iterative scheme is involved as in the \gls{IEDD} baseline.
Separable convolutional layers are used to reduce the number of weights, without incurring significant loss of performance.
Table~\ref{tab:nn_rx} provides details on the \gls{NN} implementing the receiver.
All convolutional layers use zero-padding to ensure that the dimensions of the output are the same as the ones of the input.
Dilation is leveraged to increase the receptive field of the convolutional layers.
The two separable convolutional layers in a ResNet block (Fig.~\ref{fig:resnet}) share the same dimensions.
A similar but somewhat larger architecture was used in~\cite{deeprx} with great success.

\begin{table}
\begin{center}
  \begin{tabular}{ l | c | c | c }
    \hline
    Layer			& Channels	& Kernel size 	& Dilatation rate	\\ \hline
    Input Conv2D 	& 64		& (3,3)			& (1,1)				\\ \hline
    ResNet block 1	& 128		& (7,7)			& (7,2)				\\ \hline
    ResNet block 2	& 128		& (7,5)			& (7,1)				\\ \hline
    ResNet block 3	& 128		& (5,3)			& (1,2)				\\ \hline
    ResNet block 4	& 128		& (3,3)			& (1,1)				\\ \hline
    ResNet block 5	& 128		& (3,3)			& (1,1)				\\ \hline
    Output Conv2D 	& $m$		& (1,1)			& (1,1)				\\ \hline
  \end{tabular}
\end{center}
\caption{Architecture details of the \gls{NN} implementing the receiver\label{tab:nn_rx}}
\end{table}

\section{Simulation results}
\label{sec:results}

We will now present the results of the simulations we have conducted to evaluate the machine learning-based schemes introduced in the previous section.
We start by explaining the training and evaluation setup.
Then, we benchmark the \gls{NN}-based receiver introduced in Section~\ref{sec:nn_arch} against the non-iterative and iterative baselines presented in Section~\ref{sec:chmod_bsl}, considering conventional \gls{QAM} at the transmitter with pilots patterns from \gls{5GNR} and over different speeds.
The end-to-end learning schemes introduced in Sections~\ref{sec:e2e_gs} and~\ref{sec:e2e_sip} are then compared to conventional approaches leveraging orthogonal pilot patterns.

\subsection{Evaluation setup}
\label{sec:setup}

\begin{figure}
 	\centering
 	\begin{subfigure}{0.45\linewidth}
 		\centering
		\includegraphics[scale=1.0]{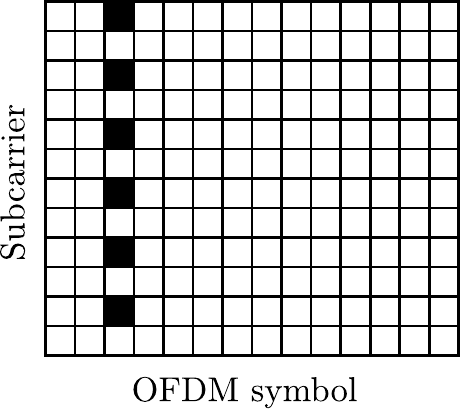}
		\caption{Pilot pattern ``1P''\label{fig:p1}}
	\end{subfigure}\quad\quad
 	\begin{subfigure}{0.45\linewidth}
 		\centering
		\includegraphics[scale=1.0]{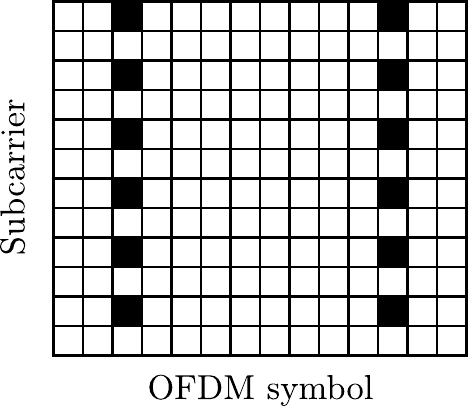}
		\caption{Pilot pattern ``2P''\label{fig:p2}}
	\end{subfigure}
	\caption{Pilot patterns from \gls{5GNR} used in simulations\label{fig:pilot}}
\end{figure}

\begin{table}
\begin{center}
  \begin{tabular}{ l | c | c }
    \hline
    Parameter			& Symbol (if any) & Value			\\ \hline
    Number of OFDM symbols		&	$n_T$ 		& 14 (1 slot)	\\ \hline
    Number of subcarriers 		&	$n_S$		& 72 (6 PRBs)	\\ \hline
    Frequency carrier 			&	$f_c$		& \SI{2.6}{\GHz}	\\ \hline
    Subcarrier spacing			&	$\Delta_f$	& \SI{15}{\kHz}	\\ \hline
    Cycle prefix duration		&	(None)	& \SI{5.2}{\micro\s}	\\ \hline
    Noise variance				&	$\sigma^2$	& \SI{-20}{\deci\bel}		\\ \hline
    Learning rate				&	(None)		& $10^{-3}$		\\ \hline
    Batch size for training		&	$S$			& 100 frames	\\ \hline
    Power delay profiles for training		&	(None)		& TDL-B and TDL-C \\ \hline
    Power delay profiles for evaluation		&	(None)		& TDL-A \\ \hline
    Bit per channel use			&	$m$		&	\SI{6}{\bit}	\\ \hline
    Code length					&	(None)		& \SI{1944}{\bit} \\ \hline
    Code rate					&	$r$			& $\frac{2}{3}$ \\ \hline
    Low speed range				&	(None)		& $0$ to \SI{5.1}{\m\per\s}	\\ \hline
    Medium speed range			&	(None)		& $13.6$ to \SI{18.8}{\m\per\s}	\\ \hline
    High speed range			&	(None)		& $27.4$ to \SI{32.5}{\m\per\s}	\\ \hline
    Speed range used for training	&	(None)		& $0$ to \SI{32.5}{\m\per\s}	\\ \hline
    Delay spread range for evaluation	&	(None)		& $70$ to \SI{140}{\ns}	\\ \hline
    Delay spread range used for training	&	(None)		& $10$ to \SI{1000}{\ns}	\\ \hline
  \end{tabular}
\end{center}
\caption{Parameters used for training and evaluation\label{tab:setup_param}}
\end{table}

The channel model introduced in Section~\ref{sec:ch_mod} was considered to train the end-to-end systems and to evaluate them against multiple baselines.
In addition to the non-iterative and iterative baselines, perfect channel knowledge at the receiver was also considered as it provides an upper-bound on the achievable performance.
Table~\ref{tab:setup_param} shows the parameters used to train and evaluate the machine learning-based approaches.
Three speed ranges were considered, referred to as ``low'', ``medium'', and ``high'' that are within the range from $0$ to \SI{35}{\m\per\s}.
Training of the machine learning-based approaches was done over the entire speed range and over the delay spread range from $10$ to \SI{1000}{\ns}.
The TDL-B and TDL-C \gls{3GPP} power delay profiles were used for training, whereas the TDL-A profile was used for evaluation to ensure that no overfitting to a particular power delay profile has occurred.
A Gray-labeled \gls{QAM} with modulation order 64 ($m = 6$) was used.
For a given \gls{OFDM} frame with realization $\Hm$, the energy per bit to noise power spectral density ratio is defined as
\begin{equation}
\frac{E_b}{\sigma^2} \coloneqq \frac{1}{\rho}\frac{\sum_{i=1}^{n_S} \sum_{k=1}^{n_T} \abs{H_{i,k}}^2}{nmr\sigma^2}
\end{equation}
assuming $\EE\LP \abs{X_{i,k}}^2 \RP = 1$ for all \glspl{RE} $(i,k)$, and where $\rho$ is the ratio of \glspl{RE} carrying data symbols within a frame (the rest of the \glspl{RE} being used for pilots) and $r$ is the code rate.
Two pilot patterns from \gls{5GNR} were considered, which are shown in Fig.~\ref{fig:pilot}.
The first one has pilots on only one \gls{OFDM} symbol (Fig.~\ref{fig:p1}, $\rho = \frac{162}{168}$), whereas the second one has extra pilots on a second \gls{OFDM} symbol (Fig.~\ref{fig:p2}, $\rho = \frac{156}{168}$), which makes it more suitable to high mobility scenarios.
These two pilot patterns are referred to as ``1P'' and ``2P'', respectively.

A standard IEEE~802.11n \gls{LDPC} code of length $1944\:$bit and with rate $\frac{2}{3}$ was used~\cite{5307322}.
Decoding was done with conventional belief-propagation, using 40 iterations for the non-iterative baseline and machine learning-based approaches.
Regarding the iterative baseline, 4 iterations were performed, each involving a channel estimation and soft-demapping step as well as 10 iterations of belief-propagation.
Each transmitted \gls{OFDM} frame contained 3 codewords and was filled up with randomly generated padding bits.
Interleaving was performed within individual frames.

The \gls{NN}-based receiver operates on the entire frame.
For fairness, the channel estimations~(\ref{eq:lmmse_est}) and~(\ref{eq:iedd_est}) for the non-iterative and iterative baselines, respectively, are also performed on the entire frame.
This involves the multiplication and inversion of matrices of dimension $1008\times1008$.
Moreover, it requires knowledge of the channel correlation matrix $\Rm$, which depends on the the Doppler and delay spread.
However, the receiver typically does not have access to this information, and assuming it to be available would lead to an unfair comparison with the \gls{NN}-based receiver which is not fed with the Doppler and delay spread.
Therefore, a unique correlation matrix has been estimated by sampling $10^6$ random channel realizations over the entire speed and delay spread ranges, and using the TDL-B and TDL-C power delay profiles, i.e., $\Rm$ was approximated by $\frac{1}{10^6} \sum_{l=1}^{10^6} \hv_{l} \hv_{l}\htp$, where $\hv_l$ is the $l^{th}$ sampled channel realization. The so-obtained correlation matrix estimate was used to implement the baselines.

\subsection{Evaluation of the NN-based receiver}
\label{sec:rx_eval}

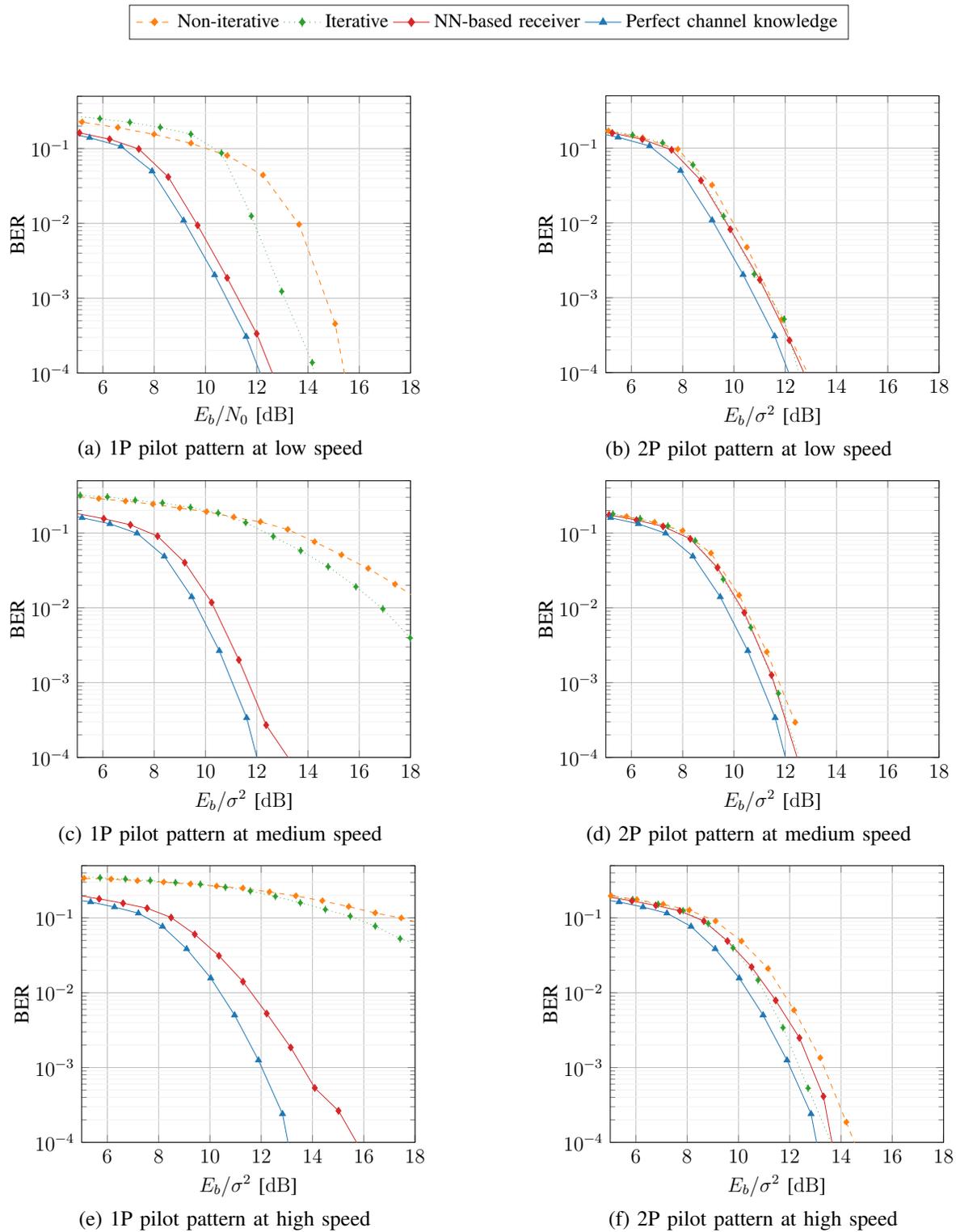
\begin{figure}
	\definecolor{C0}{HTML}{1F77B4}
	\definecolor{C1}{HTML}{FF7F0E}
	\definecolor{C2}{HTML}{2CA02C}
	\definecolor{C3}{HTML}{D62728}
	\definecolor{C4}{HTML}{9467BD}
 	\centering
 	\begin{subfigure}{1\linewidth}
 		\centering
		\begin{tikzpicture}
    		\begin{axis}[%
    			hide axis,
    			clip bounding box=upper bound,
    			legend style={draw=white!15!black,font=\footnotesize},
    			legend columns=4,
      			xmin=10,
    			xmax=10,
    			ymin=0,
    			ymax=0.1
    		]
    		\addlegendimage{C1,dashed, mark=diamond*}%
    		\addlegendentry{Non-iterative}%
   			\addlegendimage{C2,dotted, mark=diamond*}%
    		\addlegendentry{Iterative}%
   			\addlegendimage{C3,solid, mark=diamond*}%
    		\addlegendentry{NN-based receiver}%
    		\addlegendimage{C0,solid, mark=triangle*}%
    		\addlegendentry{Perfect channel knowledge}%
    		\end{axis}%
		\end{tikzpicture}
	\end{subfigure}
    \vspace{0.5\baselineskip}
	
 	\begin{subfigure}{0.45\linewidth}
 		\centering
		\begin{tikzpicture}[scale=0.8]
			\begin{semilogyaxis}[
				grid=both,
				grid style={line width=.4pt, draw=gray!10},
				major grid style={line width=.2pt,draw=gray!50},
				xlabel={$E_b/N_0$ [\si{\decibel}]},
				ylabel={BER},
				xmax=18,
				xmin=5,
				ymax=0.5,
				ymin=1e-4,
			]
				\addplot[C1,dashed, mark=diamond*] table [x=snr, y=ber, col sep=comma] {figs/qam_lmmse1p__ber_snr__speed_low.csv};		
				\addplot[C2,dotted, mark=diamond*] table [x=snr, y=ber, col sep=comma] {figs/qam_idd1p__ber_snr__speed_low.csv};		
				\addplot[C3,solid, mark=diamond*] table [x=snr, y=ber, col sep=comma] {figs/qam_nnrx1p__ber_snr__speed_low.csv};		
				\addplot[C0,solid, mark=triangle*] table [x=snr, y=ber, col sep=comma] {figs/qam_perfcsi__ber_snr__speed_low.csv};			
				\end{semilogyaxis}	
			\end{tikzpicture}
    		\vspace{-0.4\baselineskip}
			\caption{1P pilot pattern at low speed\label{fig:rx_ber_1p_splow_dsmed}}
	\end{subfigure}\quad\quad\quad
 	\begin{subfigure}{0.45\linewidth}
 		\centering
		\begin{tikzpicture}[scale=0.8]
			\begin{semilogyaxis}[
    			clip bounding box=upper bound,
				grid=both,
				grid style={line width=.4pt, draw=gray!10},
				major grid style={line width=.2pt,draw=gray!50},
				xlabel={$E_b/\sigma^2$ [\si{\decibel}]},
				ylabel={BER},
				xmax=18,
				xmin=5,
				ymax=0.5,
				ymin=1e-4,
			]
			
				\addplot[C1,dashed, mark=diamond*] table [x=snr, y=ber, col sep=comma] {figs/qam_lmmse2p__ber_snr__speed_low.csv};						
				\addplot[C2,dotted, mark=diamond*] table [x=snr, y=ber, col sep=comma] {figs/qam_idd2p__ber_snr__speed_low.csv};			
				\addplot[C3,solid, mark=diamond*] table [x=snr, y=ber, col sep=comma] {figs/qam_nnrx2p__ber_snr__speed_low.csv};		
				\addplot[C0,solid, mark=triangle*] table [x=snr, y=ber, col sep=comma] {figs/qam_perfcsi__ber_snr__speed_low.csv};			
				\end{semilogyaxis}	
			\end{tikzpicture}
    		\vspace{-0.4\baselineskip}
			\caption{2P pilot pattern at low speed\label{fig:rx_ber_2p_splow_dsmed}}
	\end{subfigure}
    \vspace{0.3\baselineskip}
	
 	\begin{subfigure}{0.45\linewidth}
 		\centering
		\begin{tikzpicture}[scale=0.8]
			\begin{semilogyaxis}[
    			clip bounding box=upper bound,
				grid=both,
				grid style={line width=.4pt, draw=gray!10},
				major grid style={line width=.2pt,draw=gray!50},
				xlabel={$E_b/\sigma^2$ [\si{\decibel}]},
				ylabel={BER},
				xmax=18,
				xmin=5,
				ymax=0.5,
				ymin=1e-4,
			]
				\addplot[C1,dashed, mark=diamond*] table [x=snr, y=ber, col sep=comma] {figs/qam_lmmse1p__ber_snr__speed_med.csv};		
				\addplot[C2,dotted, mark=diamond*] table [x=snr, y=ber, col sep=comma] {figs/qam_idd1p__ber_snr__speed_med.csv};		
				\addplot[C3,solid, mark=diamond*] table [x=snr, y=ber, col sep=comma] {figs/qam_nnrx1p__ber_snr__speed_med.csv};		
				\addplot[C0,solid, mark=triangle*] table [x=snr, y=ber, col sep=comma] {figs/qam_perfcsi__ber_snr__speed_med.csv};			
				\end{semilogyaxis}	
			\end{tikzpicture}
    		\vspace{-0.4\baselineskip}
			\caption{1P pilot pattern at medium speed\label{fig:rx_ber_1p_spmed_dsmed}}
	\end{subfigure}\quad\quad\quad
 	\begin{subfigure}{0.45\linewidth}
 		\centering
		\begin{tikzpicture}[scale=0.8]
			\begin{semilogyaxis}[
    			clip bounding box=upper bound,
				grid=both,
				grid style={line width=.4pt, draw=gray!10},
				major grid style={line width=.2pt,draw=gray!50},
				xlabel={$E_b/\sigma^2$ [\si{\decibel}]},
				ylabel={BER},
				xmax=18,
				xmin=5,
				ymax=0.5,
				ymin=1e-4,
			]
				\addplot[C1,dashed, mark=diamond*] table [x=snr, y=ber, col sep=comma] {figs/qam_lmmse2p__ber_snr__speed_med.csv};		
				\addplot[C2,dotted, mark=diamond*] table [x=snr, y=ber, col sep=comma] {figs/qam_idd2p__ber_snr__speed_med.csv};			
				\addplot[C3,solid, mark=diamond*] table [x=snr, y=ber, col sep=comma] {figs/qam_nnrx2p__ber_snr__speed_med.csv};		
				\addplot[C0,solid, mark=triangle*] table [x=snr, y=ber, col sep=comma] {figs/qam_perfcsi__ber_snr__speed_med.csv};			
				\end{semilogyaxis}	
			\end{tikzpicture}
    		\vspace{-0.4\baselineskip}
			\caption{2P pilot pattern at medium speed\label{fig:rx_ber_2p_spmed_dsmed}}
	\end{subfigure}
    \vspace{0.3\baselineskip}
	
 	\begin{subfigure}{0.45\linewidth}
 		\centering
		\begin{tikzpicture}[scale=0.8]
			\begin{semilogyaxis}[
    			clip bounding box=upper bound,
				grid=both,
				grid style={line width=.4pt, draw=gray!10},
				major grid style={line width=.2pt,draw=gray!50},
				xlabel={$E_b/\sigma^2$ [\si{\decibel}]},
				ylabel={BER},
				xmax=18,
				xmin=5,
				ymax=0.5,
				ymin=1e-4,
			]
				\addplot[C1,dashed, mark=diamond*] table [x=snr, y=ber, col sep=comma] {figs/qam_lmmse1p__ber_snr__speed_high.csv};		
				\addplot[C2,dotted, mark=diamond*] table [x=snr, y=ber, col sep=comma] {figs/qam_idd1p__ber_snr__speed_high.csv};		
				\addplot[C3,solid, mark=diamond*] table [x=snr, y=ber, col sep=comma] {figs/qam_nnrx1p__ber_snr__speed_high.csv};		
				\addplot[C0,solid, mark=triangle*] table [x=snr, y=ber, col sep=comma] {figs/qam_perfcsi__ber_snr__speed_high.csv};			
				\end{semilogyaxis}	
			\end{tikzpicture}
    		\vspace{-0.4\baselineskip}
			\caption{1P pilot pattern at high speed\label{fig:rx_ber_1p_sphigh_dsmed}}
	\end{subfigure}\quad\quad\quad
 	\begin{subfigure}{0.45\linewidth}
 		\centering
		\begin{tikzpicture}[scale=0.8]
			\begin{semilogyaxis}[
    			clip bounding box=upper bound,
				grid=both,
				grid style={line width=.4pt, draw=gray!10},
				major grid style={line width=.2pt,draw=gray!50},
				xlabel={$E_b/\sigma^2$ [\si{\decibel}]},
				ylabel={BER},
				xmax=18,
				xmin=5,
				ymax=0.5,
				ymin=1e-4,
			]		
				\addplot[C1,dashed, mark=diamond*] table [x=snr, y=ber, col sep=comma] {figs/qam_lmmse2p__ber_snr__speed_high.csv};		
				\addplot[C2,dotted, mark=diamond*] table [x=snr, y=ber, col sep=comma] {figs/qam_idd2p__ber_snr__speed_high.csv};		
				\addplot[C3,solid, mark=diamond*] table [x=snr, y=ber, col sep=comma] {figs/qam_nnrx2p__ber_snr__speed_high.csv};		
				\addplot[C0,solid, mark=triangle*] table [x=snr, y=ber, col sep=comma] {figs/qam_perfcsi__ber_snr__speed_high.csv};			

				\end{semilogyaxis}	
			\end{tikzpicture}
    		\vspace{-0.4\baselineskip}
			\caption{2P pilot pattern at high speed\label{fig:rx_ber_2p_sphigh_dsmed}}
	\end{subfigure}
	\caption{BER achieved by the evaluated receivers for different ranges of speed\label{fig:rx_ber}}
\end{figure}

In this section, we will focus only on a learned receiver as described above using conventional \gls{QAM} and pilot patterns from \gls{5GNR} (see Fig.~\ref{fig:pilot}).
An \gls{NN}-based receiver was optimized for each pilot pattern.
By training only the receiver on the total \gls{BCE}~(\ref{eq:bce}), one minimizes the \gls{KL} divergence between the posterior distribution on the transmitted bits implemented by the \gls{NN}-based receiver and the true posterior distribution, as the first term in~(\ref{eq:rate}) depends only on the transmitter and channel.
In other words, the receiver is optimized to compute values close to the true posterior distribution on the transmitted bits given the channel output.

The first column of Fig.~\ref{fig:rx_ber} shows the \glspl{BER} achieved by the evaluated schemes with the pilot pattern 1P for the three speed ranges.
As one can see, only the \gls{NN}-based receiver achieves \glspl{BER} within \SI{2}{\decibel} from the perfect channel knowledge bound for all speed ranges.
At low speeds (Fig.~\ref{fig:rx_ber_1p_splow_dsmed}), the \gls{NN}-based receiver enables \glspl{BER} less than \SI{0.5}{\decibel} from the perfect channel knowledge bound.
All schemes experience higher \glspl{BER} as the speed increases (Fig~\ref{fig:rx_ber_1p_spmed_dsmed} and Fig.~\ref{fig:rx_ber_1p_sphigh_dsmed}).
However, conventional baselines are less robust to higher speeds compared to the neural receiver.
This is especially true for the iterative scheme, which enables significant gains at low speeds compared to the non-iterative approach, but which vanish at high speeds.

Similarly, the second column of Fig.~\ref{fig:rx_ber} compares the \glspl{BER} of all schemes using the 2P pilot patter.
As one can see, all schemes have \glspl{BER} close to the perfect channel knowledge limit. 
At high speeds (Fig.~\ref{fig:rx_ber_2p_sphigh_dsmed}), the iterative baseline achieves the lowest \glspl{BER}, closely followed by the neural receiver which outperforms the non-iterative baseline.
Similar simulations where conducted considering three delay spread ranges.
The results revealed no significant impact of the delay spread on the relative performance, and are therefore not shown in this paper.

\begin{figure}
 	\centering

 	\begin{subfigure}{\linewidth}
 		\centering
		\includegraphics[scale=0.5]{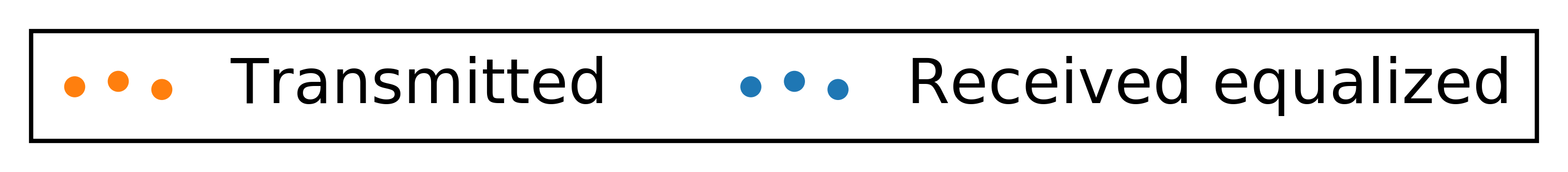}
	\end{subfigure} 	
 	
 	\begin{subfigure}{\linewidth}
 		\centering
		\includegraphics[scale=0.55]{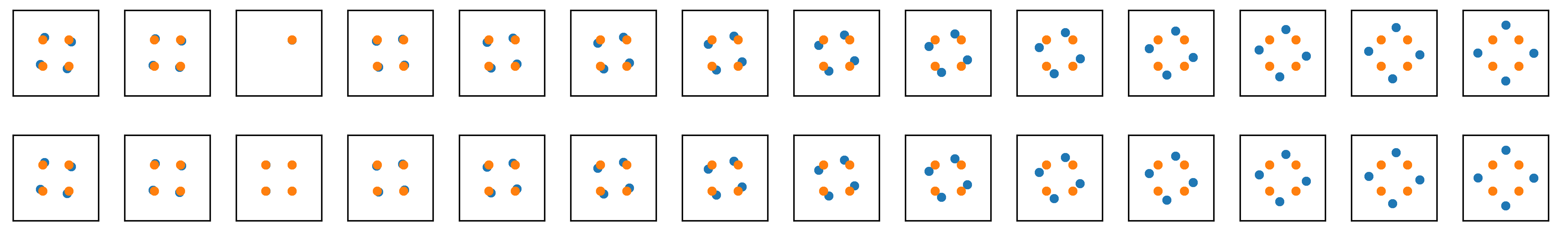}
		\caption{Constellations with pilot pattern ``1P''\label{fig:consts_p1}}
	\end{subfigure}
	
 	\begin{subfigure}{\linewidth}
 		\centering
		\includegraphics[scale=0.55]{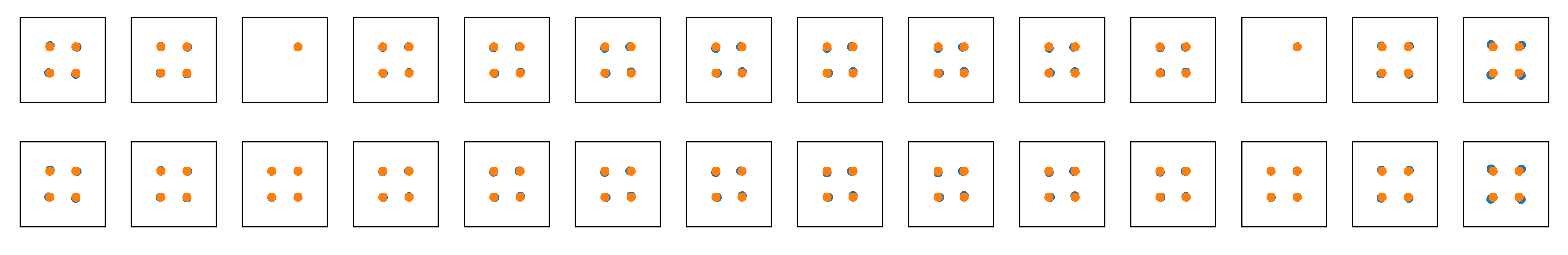}
		\caption{Constellations with pilot pattern ``2P''\label{fig:consts_p2}}
	\end{subfigure}
	\caption{Equalized received symbols at high speed, medium delay spread, and infinite SNR. The effect of a single channel realization is shown. The \gls{RE} containing a single point corresponds to the pilots.\label{fig:consts}}
\end{figure}

To get insight into why conventional baselines fail when using the 1P pilot pattern, Fig.~\ref{fig:consts} shows the equalized received symbols using the non-iterative baseline for one slot and 2 subcarriers in the high speed scenario (i.e., corresponding to the \glspl{BER} of Fig.~\ref{fig:rx_ber_1p_sphigh_dsmed} and Fig.~\ref{fig:rx_ber_2p_sphigh_dsmed}).
This figure was generated considering the effect of a single channel realization, and assuming an infinite \gls{SNR}, such that only the effect of channel aging appears.
A \gls{QPSK} constellation was used for readability.
The \glspl{RE} in which a single point is visible correspond to pilots.
As one can see, when the 1P pilot pattern is used (Fig.~\ref{fig:consts_p1}), equalized symbols match the ground-truth constellation for \glspl{RE} located near the pilots.
However, for \glspl{RE} located further away from the reference signals, i.e., located at the end of the slot, channel aging causes significant mismatch between the true channel coefficients and the estimated ones, despite the infinite \gls{SNR}.
This effect is amplified by the channel noise, and leads to the high error rates observed when using such a pilot pattern.
One can therefore suppose that the \gls{NN}-based receiver is able to better correct the effects of channel aging compared to conventional baselines, by possibly leveraging the data-carrying symbols for improved channel estimation.
When using the 2P pilot pattern (Fig.~\ref{fig:consts_p2}), the presence of a second pilot at the end of the slot reduces the effect of channel aging, leading to much lower error rates.

From these results, one can conclude that \gls{NN}-based receivers are mostly beneficial in high-speed scenarios and/or for sparse pilot patterns.
We have observed that we can drastically reduce the number of pilots, in some cases down to a single subcarrier carrying two pilots, without significant increase of \gls{BER}.
A similar observation was made in~\cite{8052521} regarding the robustness to a lower number of pilots.
Although lower \glspl{BER} can be achieved by transmitting more pilots with conventional baselines, this comes at the cost of lower throughput.
Therefore, an \gls{NN}-based receiver can enable higher throughputs.
The next section demonstrates that joint optimization of the transmitter and receiver enables further throughput gains by completely removing the need for orthogonal reference signals.

\subsection{Evaluation of end-to-end learning}
\label{sec:e2e_res}

The two end-to-end schemes presented in Sections~\ref{sec:e2e_gs} and~\ref{sec:e2e_sip} are evaluated in this section.
In addition to the previously introduced baselines, joint optimization of the constellation geometry and bit labeling, assuming perfect channel knowledge at the receiver is also considered.

\begin{figure}
 	\centering
	\includegraphics[scale=0.45]{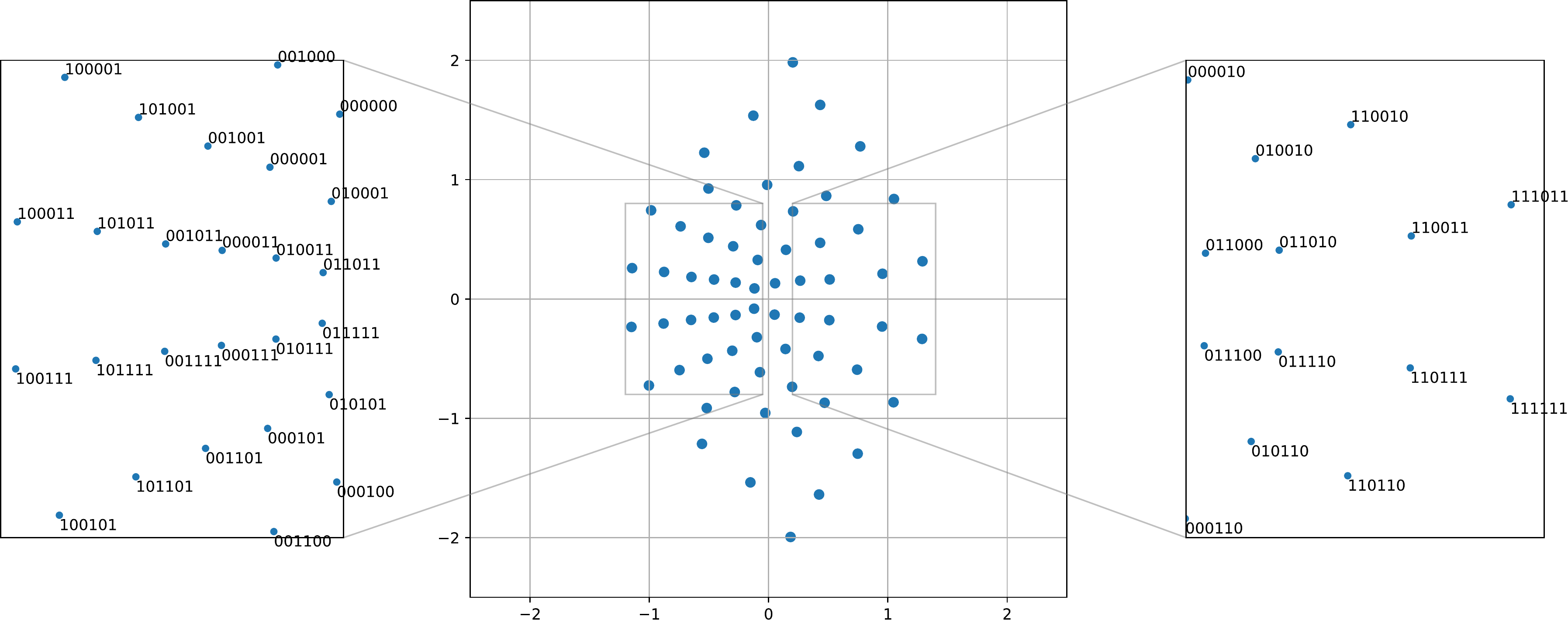}
	\caption{Constellation obtained from end-to-end learning which eliminates the need for orthogonal or superimposed pilots\label{fig:gs_const}}
\end{figure}

Fig.~\ref{fig:gs_const} shows the constellation and labeling obtained by training the end-to-end system introduced in Section~\ref{sec:e2e_gs}, and referred to as the \gls{GS} scheme.
One can see that the learned constellation has a unique horizontal axis of symmetry.
Moreover, a form of Gray labeling was learned, such that points next to each other differ by one bit.

\begin{figure}
 	\centering
	\includegraphics[scale=0.45]{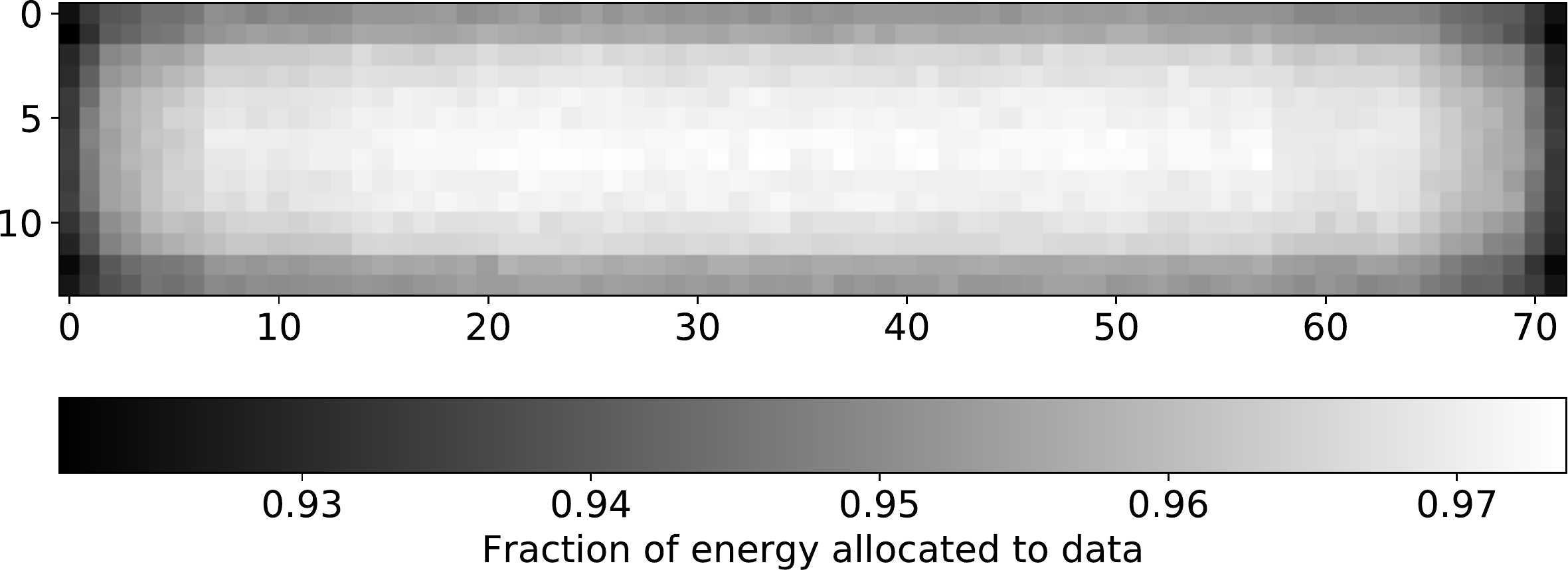}
	\caption{SIP allocation learned by the end-to-end system\label{fig:sip_map}}
\end{figure}

Regarding the \gls{QAM}-\gls{SIP} scheme introduced in Section~\ref{sec:e2e_sip}, the pilot allocation obtained by optimizing the system is shown in Fig.~\ref{fig:sip_map}.
One can see that for all \glspl{RE}, less than $10\%$ of the available energy is allocated to \glspl{SIP}.
The \glspl{RE} located at the edges of the frame have a higher ratio of their energy allocated to \glspl{SIP} compared to the other \glspl{RE}, especially, the first and last subcarriers of the frame.
However, overall, the variation across the frame is rather small.
Thus, optimizing a single energy-level used by all the \glspl{RE} could be sufficient and reduce implementation complexity.

\begin{figure}
	\definecolor{C0}{HTML}{1F77B4}
	\definecolor{C1}{HTML}{FF7F0E}
	\definecolor{C2}{HTML}{2CA02C}
	\definecolor{C3}{HTML}{D62728}
	\definecolor{C4}{HTML}{9467BD}
	\definecolor{C5}{HTML}{8c564b}
	\definecolor{C6}{HTML}{e377c2}
	\definecolor{C7}{HTML}{7f7f7f}
 	\centering
 	\begin{subfigure}{1\linewidth}
 		\centering
		\begin{tikzpicture}
    		\begin{axis}[%
    			hide axis,
    			clip bounding box=upper bound,
    			legend style={draw=white!15!black,font=\footnotesize},
    			legend columns=3,
      			xmin=10,
    			xmax=10,
    			ymin=0,
    			ymax=0.1
    		]
    		\addlegendimage{C2,dashed, mark=diamond*}%
    		\addlegendentry{QAM \& Iterative 1P}%
   			\addlegendimage{C3,solid, mark=diamond*}%
    		\addlegendentry{QAM \& NN-based receiver 1P}%
    		\addlegendimage{C0,solid, mark=triangle*}%
    		\addlegendentry{QAM \& Perfect channel knowledge}%
        	\addlegendimage{C2,dashed, mark=square*}%
    		\addlegendentry{QAM \& Iterative 2P}%
    		\addlegendimage{C3,solid, mark=square*}%
    		\addlegendentry{QAM \& NN-based receiver 2P}%
    		\addlegendimage{C4,solid, mark=asterisk}%
    		\addlegendentry{GS \& Perfect channel knowledge}%
    		\addlegendimage{C5,solid, mark=*}%
       		\addlegendentry{GS \& NN-based receiver}%
     		\addlegendimage{C6,solid, mark=pentagon*}%
    		\addlegendentry{QAM-SIP \& NN-based receiver}%

    		\end{axis}%
		\end{tikzpicture}
	\end{subfigure}
    \vspace{0.5\baselineskip}
	
 	\begin{subfigure}{0.45\linewidth}
 		\centering
		\begin{tikzpicture}[scale=0.8]
			\begin{semilogyaxis}[
				grid=both,
				grid style={line width=.4pt, draw=gray!10},
				major grid style={line width=.2pt,draw=gray!50},
				xlabel={$E_b/\sigma^2$ [\si{\decibel}]},
				ylabel={BER},
				xmax=18,
				xmin=5,
				ymax=0.5,
				ymin=1e-4,
			]
				\addplot[C3,solid, mark=diamond*] table [x=snr, y=ber, col sep=comma] {figs/qam_nnrx1p__ber_snr__speed_low.csv};
				\addplot[C5,solid, mark=*] table [x=snr, y=ber, col sep=comma] {figs/gs_nnrx__ber_snr__speed_low.csv};		
				\addplot[C0,solid, mark=triangle*] table [x=snr, y=ber, col sep=comma] {figs/qam_perfcsi__ber_snr__speed_low.csv};		
				\addplot[C3,solid, mark=square*] table [x=snr, y=ber, col sep=comma] {figs/qam_nnrx2p__ber_snr__speed_low.csv};		
				\addplot[C6,solid, mark=pentagon*] table [x=snr, y=ber, col sep=comma] {figs/qam_nnrxsip__ber_snr__speed_low.csv};		
				\addplot[C4,solid, mark=asterisk] table [x=snr, y=ber, col sep=comma] {figs/gs_perfcsi__ber_snr__speed_low.csv};	
				\end{semilogyaxis}	
			\end{tikzpicture}
    		\vspace{-0.4\baselineskip}
			\caption{BER at low speed\label{fig:ber_speed_low}}
	\end{subfigure}\quad\quad\quad
 	\begin{subfigure}{0.45\linewidth}
 		\centering
		\begin{tikzpicture}[scale=0.8]
			\begin{axis}[
    			clip bounding box=upper bound,
				grid=both,
				grid style={line width=.4pt, draw=gray!10},
				major grid style={line width=.2pt,draw=gray!50},
				xlabel={$E_s/\sigma^2$ [\si{\decibel}]},
				ylabel={Goodput [\si{\bit\per\frame}]},
				xmax=25,
				xmin=5,
			]
				\addplot[C2,dashed, mark=diamond*] table [x=snr, y=goodput, col sep=comma] {figs/qam_idd1p__goodput_snr__speed_low.csv};
				\addplot[C3,solid, mark=diamond*] table [x=snr, y=goodput, col sep=comma] {figs/qam_nnrx1p__goodput_snr__speed_low.csv};
				\addplot[C5,solid, mark=*] table [x=snr, y=goodput, col sep=comma] {figs/gs_nnrx__goodput_snr__speed_low.csv};		
				\addplot[C0,solid, mark=triangle*] table [x=snr, y=goodput, col sep=comma] {figs/qam_perfcsi__goodput_snr__speed_low.csv};		
				\addplot[C2,dashed, mark=square*] table [x=snr, y=goodput, col sep=comma] {figs/qam_idd2p__goodput_snr__speed_low.csv};
				\addplot[C3,solid, mark=square*] table [x=snr, y=goodput, col sep=comma] {figs/qam_nnrx2p__goodput_snr__speed_low.csv};	
				\addplot[C6,solid, mark=pentagon*] table [x=snr, y=goodput, col sep=comma] {figs/qam_nnrxsip__goodput_snr__speed_low.csv};		
				\addplot[C4,solid, mark=asterisk] table [x=snr, y=goodput, col sep=comma] {figs/gs_perfcsi__goodput_snr__speed_low.csv};	
				\end{axis}	
			\end{tikzpicture}
    		\vspace{-0.4\baselineskip}
			\caption{Goodput at low speed\label{fig:gp_speed_low}}
	\end{subfigure}
    \vspace{0.3\baselineskip}
	
 	\begin{subfigure}{0.45\linewidth}
 		\centering
		\begin{tikzpicture}[scale=0.8]
			\begin{semilogyaxis}[
    			clip bounding box=upper bound,
				grid=both,
				grid style={line width=.4pt, draw=gray!10},
				major grid style={line width=.2pt,draw=gray!50},
				xlabel={$E_b/\sigma^2$ [\si{\decibel}]},
				ylabel={BER},
				xmax=18,
				xmin=5,
				ymax=0.5,
				ymin=1e-4,
			]
				\addplot[C3,solid, mark=diamond*] table [x=snr, y=ber, col sep=comma] {figs/qam_nnrx1p__ber_snr__speed_med.csv};		
				\addplot[C5,solid, mark=*] table [x=snr, y=ber, col sep=comma] {figs/gs_nnrx__ber_snr__speed_med.csv};		
				\addplot[C0,solid, mark=triangle*] table [x=snr, y=ber, col sep=comma] {figs/qam_perfcsi__ber_snr__speed_med.csv};		
				\addplot[C3,solid, mark=square*] table [x=snr, y=ber, col sep=comma] {figs/qam_nnrx2p__ber_snr__speed_med.csv};		
				\addplot[C6,solid, mark=pentagon*] table [x=snr, y=ber, col sep=comma] {figs/qam_nnrxsip__ber_snr__speed_med.csv};		
				\addplot[C4,solid, mark=asterisk] table [x=snr, y=ber, col sep=comma] {figs/gs_perfcsi__ber_snr__speed_med.csv};	
				\end{semilogyaxis}	
			\end{tikzpicture}
    		\vspace{-0.4\baselineskip}
			\caption{BER at medium speed\label{fig:ber_speed_med}}
	\end{subfigure}\quad\quad\quad
 	\begin{subfigure}{0.45\linewidth}
 		\centering
		\begin{tikzpicture}[scale=0.8]
			\begin{axis}[
    			clip bounding box=upper bound,
				grid=both,
				grid style={line width=.4pt, draw=gray!10},
				major grid style={line width=.2pt,draw=gray!50},
				xlabel={$E_s/\sigma^2$ [\si{\decibel}]},
				ylabel={Goodput [\si{\bit\per\frame}]},
				xmax=25,
				xmin=5,
			]
				\addplot[C2,dashed, mark=diamond*] table [x=snr, y=goodput, col sep=comma] {figs/qam_idd1p__goodput_snr__speed_med.csv};	
				\addplot[C3,solid, mark=diamond*] table [x=snr, y=goodput, col sep=comma] {figs/qam_nnrx1p__goodput_snr__speed_med.csv};	
				\addplot[C5,solid, mark=*] table [x=snr, y=goodput, col sep=comma] {figs/gs_nnrx__goodput_snr__speed_med.csv};		
				\addplot[C0,solid, mark=triangle*] table [x=snr, y=goodput, col sep=comma] {figs/qam_perfcsi__goodput_snr__speed_med.csv};		
				\addplot[C2,dashed, mark=square*] table [x=snr, y=goodput, col sep=comma] {figs/qam_idd2p__goodput_snr__speed_med.csv};
				\addplot[C3,solid, mark=square*] table [x=snr, y=goodput, col sep=comma] {figs/qam_nnrx2p__goodput_snr__speed_med.csv};	
				\addplot[C6,solid, mark=pentagon*] table [x=snr, y=goodput, col sep=comma] {figs/qam_nnrxsip__goodput_snr__speed_med.csv};		
				\addplot[C4,solid, mark=asterisk] table [x=snr, y=goodput, col sep=comma] {figs/gs_perfcsi__goodput_snr__speed_med.csv};		
				\end{axis}	
			\end{tikzpicture}
    		\vspace{-0.4\baselineskip}
			\caption{Goodput at medium speed\label{fig:gp_speed_med}}
	\end{subfigure}
    \vspace{0.3\baselineskip}
	
 	\begin{subfigure}{0.45\linewidth}
 		\centering
		\begin{tikzpicture}[scale=0.8]
			\begin{semilogyaxis}[
    			clip bounding box=upper bound,
				grid=both,
				grid style={line width=.4pt, draw=gray!10},
				major grid style={line width=.2pt,draw=gray!50},
				xlabel={$E_b/\sigma^2$ [\si{\decibel}]},
				ylabel={BER},
				xmax=18,
				xmin=5,
				ymax=0.5,
				ymin=1e-4,
			]
				\addplot[C3,solid, mark=diamond*] table [x=snr, y=ber, col sep=comma] {figs/qam_nnrx1p__ber_snr__speed_high.csv};		
				\addplot[C5,solid, mark=*] table [x=snr, y=ber, col sep=comma] {figs/gs_nnrx__ber_snr__speed_high.csv};		
				\addplot[C0,solid, mark=triangle*] table [x=snr, y=ber, col sep=comma] {figs/qam_perfcsi__ber_snr__speed_high.csv};		
				\addplot[C3,solid, mark=square*] table [x=snr, y=ber, col sep=comma] {figs/qam_nnrx2p__ber_snr__speed_high.csv};		
				\addplot[C6,solid, mark=pentagon*] table [x=snr, y=ber, col sep=comma] {figs/qam_nnrxsip__ber_snr__speed_high.csv};		
				\addplot[C4,solid, mark=asterisk] table [x=snr, y=ber, col sep=comma] {figs/gs_perfcsi__ber_snr__speed_high.csv};	
				\end{semilogyaxis}	
			\end{tikzpicture}
    		\vspace{-0.4\baselineskip}
			\caption{BER at high speed\label{fig:ber_speed_high}}
	\end{subfigure}\quad\quad\quad
 	\begin{subfigure}{0.45\linewidth}
 		\centering
		\begin{tikzpicture}[scale=0.8]
			\begin{axis}[
    			clip bounding box=upper bound,
				grid=both,
				grid style={line width=.4pt, draw=gray!10},
				major grid style={line width=.2pt,draw=gray!50},
				xlabel={$E_s/\sigma^2$ [\si{\decibel}]},
				ylabel={Goodput [\si{\bit\per\frame}]},
				xmax=25,
				xmin=5,
			]		
				\addplot[C2,dashed, mark=diamond*] table [x=snr, y=goodput, col sep=comma] {figs/qam_idd1p__goodput_snr__speed_high.csv};
				\addplot[C3,solid, mark=diamond*] table [x=snr, y=goodput, col sep=comma] {figs/qam_nnrx1p__goodput_snr__speed_high.csv};		
				\addplot[C5,solid, mark=*] table [x=snr, y=goodput, col sep=comma] {figs/gs_nnrx__goodput_snr__speed_high.csv};		
				\addplot[C0,solid, mark=triangle*] table [x=snr, y=goodput, col sep=comma] {figs/qam_perfcsi__goodput_snr__speed_high.csv};		
				\addplot[C2,dashed, mark=square*] table [x=snr, y=goodput, col sep=comma] {figs/qam_idd2p__goodput_snr__speed_high.csv};
				\addplot[C3,solid, mark=square*] table [x=snr, y=goodput, col sep=comma] {figs/qam_nnrx2p__goodput_snr__speed_high.csv};		
				\addplot[C6,solid, mark=pentagon*] table [x=snr, y=goodput, col sep=comma] {figs/qam_nnrxsip__goodput_snr__speed_high.csv};		
				\addplot[C4,solid, mark=asterisk] table [x=snr, y=goodput, col sep=comma] {figs/gs_perfcsi__goodput_snr__speed_high.csv};	
				\end{axis}	
			\end{tikzpicture}
    		\vspace{-0.4\baselineskip}
			\caption{Goodput at high speed\label{fig:gp_speed_high}}
	\end{subfigure}
	\caption{BER and goodput achieved by the evaluated schemes for different ranges of speed\label{fig:gp_ber}}
\end{figure}
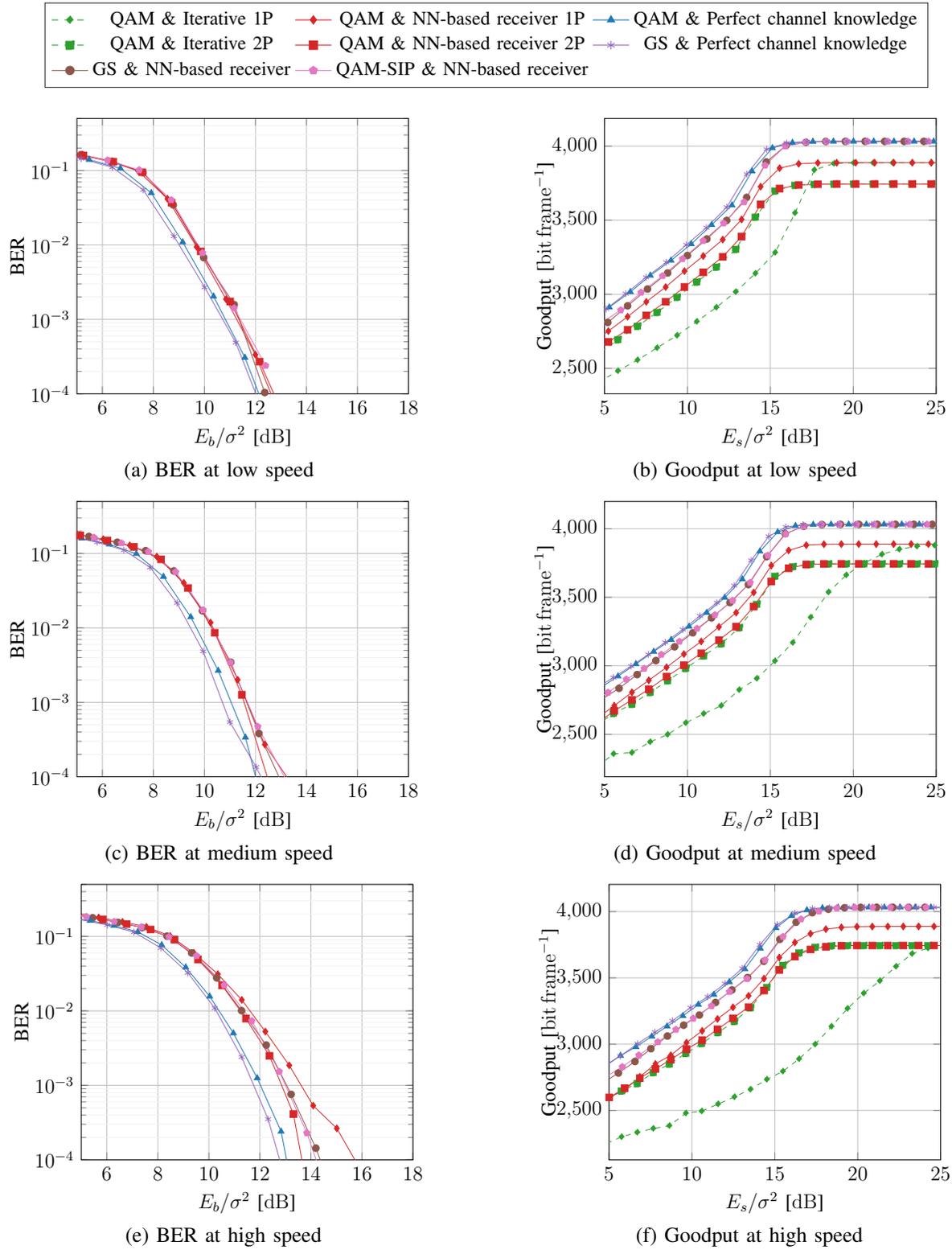

The first column of Fig.~\ref{fig:gp_ber} shows the \glspl{BER} achieved by the various schemes.
For readability, \gls{QAM} with the non-iterative (Section~\ref{sec:non_it}) and iterative (Section~\ref{sec:iedd}) receivers are not shown, as these schemes achieve similar or higher \glspl{BER} than \gls{QAM} with the \gls{NN}-based receiver (Section~\ref{sec:rx_eval}).
As one can see, when perfect channel knowledge at the receiver is assumed, optimization of the constellation geometry and labeling only makes little difference.
The \gls{GS} and \gls{QAM}-\gls{SIP} schemes  achieve \glspl{BER} similar to the ones of \gls{QAM} with orthogonal pilots and the \gls{NN}-based receiver.
This is the case regardless of which orthogonal pilot pattern is used.
Moreover, these results hold for the three considered speed ranges.

The benefits of achieving low \glspl{BER} without the requirement of transmitting orthogonal pilots, as allowed by the \gls{GS} and \gls{QAM}-\gls{SIP} schemes, is that it enables higher \emph{goodput}, as shown in the second column of Fig.~\ref{fig:gp_ber}.
The non-iterative receiver baseline was omitted as it achieves similar or worse performance than the iterative one.
The goodput measures the number of bits per frame successfully received, and is defined by
\begin{equation}
\text{Goodput} \coloneqq r \rho m n \LB 1 - \text{BER}\RB
\end{equation}
where $n$ is the number of \glspl{RE} forming a frame, and $\rho$ is the ratio of data carrying \glspl{RE} within a \gls{PRB} ($\rho = 1$ for \gls{GS}, \gls{QAM}-\gls{SIP}, and when perfect channel knowledge is assumed at the receiver, $\rho = \frac{162}{168}$ when the orthogonal 1P pilot pattern is used, and $\rho = \frac{156}{168}$ when the orthogonal 2P pilot pattern is used).
Moreover, because the goodput accounts for the unequal average number of information bits transmitted per \gls{RE} among the different schemes through the parameter $\rho$, it is plotted with respect to the energy per symbol to noise power spectral density ratio, defined as
\begin{equation}
\frac{E_s}{\sigma^2} \coloneqq \frac{\sum_{i=1}^{n_S} \sum_{k=1}^{n_T} \abs{H_{i,k}}^2}{n\sigma^2}.
\end{equation}
These plots show that an \gls{NN}-based receiver with \gls{QAM} and orthogonal pilots enables close to \SI{25}{\percent} higher goodput than the receiver baselines when the 1P pilot pattern is used at high speeds (Fig.~\ref{fig:gp_speed_high}).
One can see that for all the considered speed ranges, the \gls{GS} and \gls{QAM}-\gls{SIP} schemes achieve goodput close to the one with perfect channel knowledge at the receiver, especially when $E_s/\sigma^2$ is higher than \SI{15}{\decibel}.
The orthogonal pilot-based schemes saturate at lower values as some of the \glspl{RE} are allocated to reference signals.
The additional gains enabled by end-to-end learning range from \SI{4}{\percent} to \SI{8}{\percent} depending on which pilot pattern is used by the baselines.
Therefore, one can conclude from these results that the majority of the gains enabled by end-to-end learning can be achieved by leveraging an \gls{NN}-based receiver with sparse pilot patterns.
However, joint optimization of the transmitter and receiver is required if one wants to achieve the highest possible gains by suppressing all orthogonal pilots. 

\subsection{PAPR study}

\begin{figure}
	\definecolor{C0}{HTML}{1F77B4}
	\definecolor{C1}{HTML}{FF7F0E}
	\definecolor{C2}{HTML}{2CA02C}
	\definecolor{C3}{HTML}{D62728}
	\definecolor{C4}{HTML}{9467BD}
 	\centering	
		\begin{tikzpicture}
			\begin{axis}[
				scale=0.8,
				grid=both,
				grid style={line width=.4pt, draw=gray!10},
				major grid style={line width=.2pt,draw=gray!50},
				xlabel={PAPR [\si{\decibel}]},
				ylabel={CDF},
     			legend style={draw=white!15!black,font=\footnotesize,at={(0.4,0.9)}},
     			every axis plot/.append style={ultra thick}
			]
				\addplot[C0,solid] table [x=PAR, y=CDF, col sep=comma] {figs/papr_qam.csv};
    			\addlegendentry{QAM}%

				\addplot[C1,dashed] table [x=PAR, y=CDF, col sep=comma] {figs/papr_qamsip.csv};
    			\addlegendentry{QAM-SIP}%
	
				\addplot[C2,dotted] table [x=PAR, y=CDF, col sep=comma] {figs/papr_gs.csv};	
    			\addlegendentry{GS}%
	
				\end{axis}	
			\end{tikzpicture}

	\caption{CDF of the \gls{PAPR} \label{fig:papr}}
\end{figure}
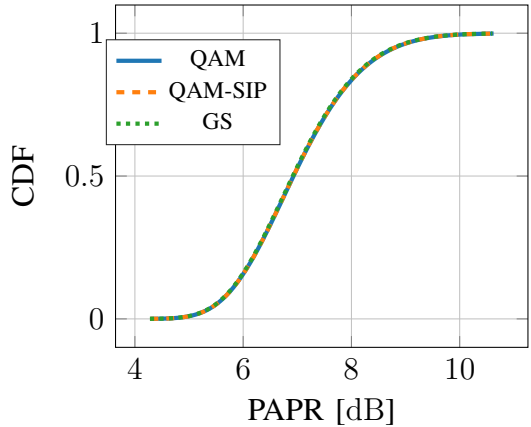

We conclude this section by evaluating the \gls{PAPR} incurred by the evaluated approaches.
Fig.~\ref{fig:papr} shows the \gls{CDF} of the \gls{PAPR} of \gls{QAM}, \gls{QAM}-\gls{SIP}, and \gls{GS}.
For each scheme, $7\times10^6$ \gls{OFDM} symbols were randomly generated, and the inverse discrete Fourier transform of each symbol was taken to obtain time domain symbols.
The \gls{PAPR} \gls{CDF} was generated based on these time domain symbols.
As one can see, \gls{QAM}-\gls{SIP} and \gls{GS} lead to nearly the same distributions of the \gls{PAPR} as conventional \gls{QAM}.
From these results, these two approaches should not lead to higher distortion of the transmitted signal than conventional \gls{QAM}.
This is encouraging towards implementation and use of the proposed schemes.

\section{Conclusion}
\label{sec:conclu}

We have evaluated the performance of a neural receiver considering an \gls{OFDM} channel model that includes frequency selectivity and channel aging.
Evaluations were performed for different speed ranges.
Our results show that when a few orthogonal pilots are used, a neural receiver operating over multiple subcarriers and \gls{OFDM} symbols enables significantly lower \glspl{BER}.
The gains are more pronounced in high mobility scenarios.
We have then demonstrated how joint optimization of the transmitter and neural receiver allows reliable symbol detection without the need for orthogonal pilots.
This permits throughput gains as no \gls{RE} is wasted for the transmission of reference signals.
Suppression of orthogonal pilots can be achieved either by learning \glspl{SIP} that are linearly combined with \gls{QAM} modulated data carrying symbols, or by learning a zero mean constellation that is used to modulate the data.
Because the learned constellation is forced to have zero mean, the second approach does not leverage any form of reference signal (orthogonal or superimposed).
Moreover, our simulations reveal that both schemes do not negatively affect the \gls{PAPR} compared to conventional \gls{QAM}.
We therefore believe that such schemes could be part of beyond-5G communication systems, as they allow unprecedented throughput and reliability, while removing the need for \glspl{DMRS}.
Apart from throughput gains, pilotless transmissions could remove the control signaling overhead related to the choice of the best suitable pilot pattern, as the learned constellations and \gls{SIP} patterns work for any \gls{SNR}, Doppler, or delay spread.
Future work could include extending this approach to \gls{MIMO} systems, where orthogonal pilots could still be needed to estimate the channel of the different users and compute the equalization and precoding matrices.
Another line of work could be the extension of such schemes to other channel models, e.g., with subcarrier interference or no cycle prefix.

\begin{appendices}

\section*{Appendix}

\subsection{Derivation of the IEDD baseline}
\label{app:iedd_der}

\subsubsection*{Derivation of~(\ref{eq:iedd_prs})}

We denote by $\text{LLR}_P(k,i)$ the prior information assumed to be available to the estimator for each bit $i \in \{1,\dots,m\}$ of each resource element $k \in \{1,\dots,n\}$.
Let $P_{X_k}$ be the corresponding prior distribution over the transmitted data symbols on the $k^{th}$ resource element.
For a given constellation point $c_u \in \Cc$, $1 \leq u \leq 2^m$, with corresponding labeling $\LB c_u^{(1)},\dots,c_u^{(m)} \RB $, with $c_u^{(i)} \in \{0,1\}$, $1 \leq i \leq m$, we have
\begin{align}
	\ln{P_{X_k}(c_u)} & = \ln{P \LB B_{k,1} = c_u^{(1)},\dots, B_{k,m} = c_u^{(m)} \RB } \nonumber\\
			   		 & \approx \sum_{i=1}^m \ln{P \LB B_{k,i} =  c_u^{(i)} \RB } \label{eq:ind_bit1}
\end{align}
where $B_{k,i}$ is the random variable corresponding to the $i^{th}$ bit transmitted in the $k^{th}$ resource element, and~(\ref{eq:ind_bit1}) is exact assuming the bits mapped to a same resource element are independent.
However, because the prior information corresponds to the output of the decoder, it is conditioned on the channel output, making the bit levels not independent in general.
Because the computation of the exact prior of the symbols is not possible from the marginal distributions of the bits only,~(\ref{eq:ind_bit1}) is only an approximation.

Moreover, we have
\begin{align}
	\ln{P \LB B_{k,i} = 0 \RB} &= -\ln{1 + \frac{P \LB B_{k,i} = 1 \RB}{P \LB B_{k,i} = 0 \RB}} \nonumber\\
								 &= -\ln{1 + \exp{\text{LLR}_P(k,i)}} \label{eq:llr_p0}
\end{align}
where the last equality comes from the definition of the \gls{LLR} as $\text{LLR}_P(k,i) \coloneqq \ln{\frac{P \LB B_{k,i} = 1 \RB}{P \LB B_{k,i} = 0 \RB}}$.
Similarly, we have
\begin{equation}
	\label{eq:llr_p1}
	\ln{P \LB B_{k,i} = 1 \RB} = \text{LLR}_P(k,i) - \ln{1 + \exp{\text{LLR}_P(k,i)}}.
\end{equation}
By combining~(\ref{eq:ind_bit1}),~(\ref{eq:llr_p0}) and~(\ref{eq:llr_p1}), we get
\begin{equation}
	\label{eq:log_pr}
	\ln{P_{X_k}(c_u)} \approx \sum_{i=1}^m c_u^{(i)} \text{LLR}_P(k,i) - \sum_{i=1}^m \ln{1 + \exp{\text{LLR}_P(k,i)}}.
\end{equation}
The second term in the left-hand side of~(\ref{eq:log_pr}) does not depend on $c_u$.
Therefore, the vector $\LSB \sum_{i=1}^m c_1^{(i)} \text{LLR}_P(k,i),\cdots,\sum_{i=1}^m c_{2^m}^{(i)} \text{LLR}_P(k,i)  \RSB$ corresponds to unscaled log-probabilities, from which the distribution $P_{X_k}$ can be recovered using the $\text{softmax}\LB\cdot\RB$ function:
\begin{equation}
\label{eq:pr}
\LSB P_{X_k}(c_1),\dots,P_{X_k}(c_{2^m}) \RSB \approx \text{softmax}\LB \sum_{i=1}^m c_1^{(i)} \text{LLR}_P(k,i),\cdots,\sum_{i=1}^m c_{2^m}^{(i)} \text{LLR}_P(k,i)  \RB
\end{equation}
with
\begin{equation}
\text{softmax}\LB l_1, \dots, l_{2^m} \RB \coloneqq  \LSB \frac{\exp{l_1}}{\sum_{i=1}^{2^m} \exp{l_i}}, \dots, \frac{\exp{l_{2^m}}}{\sum_{i=1}^{2^m} \exp{l_i}} \RSB.
\end{equation}

\subsubsection*{Derivation of~(\ref{eq:iedd_est}) and~(\ref{eq:est_err_iedd})}

Equations~(\ref{eq:iedd_est}) and~(\ref{eq:est_err_iedd}) follow directly from applying the \gls{LMMSE} estimator.
We start by rewriting the \gls{OFDM} channel transfer function~(\ref{eq:ch_tr}) as
\begin{equation*}
	\yv = \text{diag}\LB \xv \RB \hv + \wv
\end{equation*}
where $\yv = \text{vec}\LB \Ym \RB$,  $\xv = \text{vec}\LB \Xm \RB$, $\hv = \text{vec}\LB \Hm \RB$, and $\wv = \text{vec}\LB \Wm \RB$.
We remind that $\EE \LP \hv\hv\htp \RP = \Rm$ and $\EE \LP \wv\wv\htp \RP = \sigma^2\Id_n$.
Moreover, it is assumed that $x_k$ is distributed according to the prior $P_{X_k}$, $1 \leq k \leq n$.
The \gls{LMMSE} channel estimate of $\hv$ is (e.g.,~\cite[Lemma B.17]{massivemimobook})
\begin{equation}
\label{eq:lmmse_gen}
\widehat{\hv} = \EE \LP \hv \yv\htp \RP \EE \LP \yv \yv\htp \RP^{-1} \yv
\end{equation}
and the covariance matrix of the estimation error is
\begin{equation}
\widetilde{\Rm} = \EE \LP \hv \hv\htp \RP - \EE \LP \hv \yv\htp \RP \EE \LP \yv \yv\htp \RP^{-1} \EE \LP \yv \hv\htp \RP = \Rm - \EE \LP \hv \yv\htp \RP \EE \LP \yv \yv\htp \RP^{-1} \EE \LP \yv \hv\htp \RP.\label{eq:lmmse_err_gen}
\end{equation}
Then,
\begin{equation}
	 \EE \LP \hv \yv\htp \RP = \EE \LP \hv\hv\htp \RP \EE \LP \xv\htp \RP = \Rm \text{diag}\LB \bar{\xv} \RB\htp \label{eq:iedd_lmmse_1}
\end{equation}
where $\bar{x}_k \coloneqq \EE \LP x_k \RP$ with the expectation taken according to the prior $P_{X_k}$.
Moreover,
\begin{align}
	 \EE \LP \yv \yv\htp \RP &= \EE \LP \LB \text{diag}\LB\xv\RB\hv \RB \LB \text{diag}\LB\xv\RB\hv \RB\htp \RP + \sigma^2\Id_n \nonumber\\
	 &= \EE \LP \LB \xv \circ \hv \RB \LB \xv \circ \hv \RB\htp \RP + \sigma^2\Id_n \nonumber\\
	 &= \EE \LP \LB \xv\xv\htp \RB \circ \LB \hv \hv\htp \RB \RP + \sigma^2\Id_n \nonumber \\
	 &= \EE \LP \xv\xv\htp \RP \circ \Rm + \sigma^2\Id_n \label{eq:iedd_lmmse_2}
\end{align}
and
\begin{equation}
	 \EE \LP \yv \hv\htp \RP = \text{diag}\LB \bar{\xv} \RB \Rm\htp. \label{eq:iedd_lmmse_3}
\end{equation}

Combining~(\ref{eq:lmmse_gen}),~(\ref{eq:lmmse_err_gen}),~(\ref{eq:iedd_lmmse_1}),~(\ref{eq:iedd_lmmse_2}), and~(\ref{eq:iedd_lmmse_3}) leads to the desired results:
\begin{equation*}
\widehat{\hv} = \Rm \text{diag}\LB \bar{\xv} \RB\htp \LB \EE \LP \xv\xv\htp \RP \circ \Rm + \sigma^2\Id_n \RB^{-1} \yv
\end{equation*}
and
\begin{equation*}
\widetilde{\Rm} = \Rm - \Rm \text{diag}\LB \bar{\xv} \RB\htp \LB \EE \LP \xv\xv\htp \RP \circ \Rm + \sigma^2\Id_n  \RB^{-1} \text{diag}\LB \bar{\xv} \RB \Rm\htp.
\end{equation*}

\subsubsection*{Derivation of~(\ref{eq:iedd_llr})}

Focusing on the $k^{th}$ resource element, $1 \leq k \leq n$, we can rewrite the channel transfer function as
\begin{equation}
y_k = \widehat{h}_k x_k + \underbrace{\widetilde{h}_k x_k + w_k}_{\widetilde{w}_k}
\end{equation}
where $w_k \sim \Cc\Nc(0, \sigma^2)$, $\widehat{h}_k$ is the \gls{LMMSE} estimate of the channel response $h_k$, and $\widetilde{h}_k$ the channel estimation error with variance $\widetilde{R}_{k,k}$.
It is assumed that prior information is available on the transmitted bits in the form of \glspl{LLR}, which for the $i^{th}$ bit transmitted in the $k^{th}$ resource element, is denoted by $\text{LLR}_E(k,i)$.
The transmitted symbols $X_k$ are assumed to be \emph{apriori} distributed according to a distribution $\widetilde{P}_{X_k}$ computed as in~(\ref{eq:iedd_prs}) but using the extrinsic information $\text{LLR}_E$ instead of $\text{LLR}_P$.
Note that $\EE\LB \widetilde{w}_k\widetilde{w}_k^* \RB = \widetilde{R}_{k,k} + \sigma^2$.
Given a constellation point $c \in \Cc$ and assuming $\widetilde{w}_k$ is Gaussian distributed we have
\begin{align}
	P \LB X_k = c | \widehat{h}_k, y_k \RB &= \frac{\widetilde{P}_{X_k} \LB c \RB P \LB y_k|X_k = c, \widehat{h}_k  \RB}{P \LB y_k|\widehat{h}_k \RB }\\
		&= \frac{\exp{-\frac{1}{\widetilde{\sigma_k}'^2}\abs{y_k - \widehat{h}_k c}^2 + \sum_{i=1}^m c^{(i)} \text{LLR}_E(k,i)}}{P \LB y_k|\widehat{h}_k \RB \pi\widetilde{\sigma_k}'^2 \sum_{a \in \Cc}\exp{\sum_{i=1}^m a^{(i)} \text{LLR}_E(k,i)}} 		
\end{align}
where $\widetilde{\sigma_k}'^2 = \widetilde{R}_{k,k} + \sigma^2$.
Now, we have
\begin{align}
\text{LLR}(k,i) &= \ln{\frac{P \LB B_{k,i} = 1 | \widehat{h}_k, y_k \RB}{P \LB B_{k,i} = 0 | \widehat{h}_k, y_k \RB}} \nonumber\\
				&= \ln{\frac{\sum_{c \in \Cc_{i,1}} \exp{-\frac{1}{\widetilde{\sigma_k}'^2}\abs{y_k - \widehat{h}_k c}^2 + \sum_{i=1}^m c^{(i)} \text{LLR}_E(k,i)}}{\sum_{c \in \Cc_{i,0}} \exp{-\frac{1}{\widetilde{\sigma_k}'^2}\abs{y_k - \widehat{h}_k c}^2 + \sum_{i=1}^m c^{(i)} \text{LLR}_E(k,i)}}}
\end{align}
where $\Cc_{i,0}$($\Cc_{i,1}$) is the subset of $\Cc$ which contains all constellation points with the $i^{th}$ labeling bit set to 0 (1).
A similar expression can be found in~\cite{775793}.

\subsection{Proof of Proposition~\ref{prop:ach}}
\label{app:rate_proof}

We provide here a proof that~(\ref{eq:rate}) is an achievable rate assuming some conditions on the channel and receiver hold true.
The stochastic process generating the channel response is assumed to be memoryless and weak-sense stationary.
By weak-sense stationary, we mean that for any couple of indexes $(i,k)$, $\EE \LP h_i \RP = \EE \LP h_k \RP$, $\EE \LP h_i h_k^* \RP = \EE \LP h_{i-k} h_0^* \RP$, and $\EE \LP \abs{h_i}^2 \RP < \infty$.
The channel model presented in Section~\ref{sec:ch_mod} is both memoryless and stationary.
Moreover, by achievable rate, we mean according to the standard definition~\cite[§8.5]{cover1999elements}.

We leverage the random coding argument.
We aim to transmit a message randomly chosen from a set $\{1,\dots,2^u\}$, where $u$ is an integer.
Each resource element takes as input $m$ bits, and a frame is formed of $n$ resource elements, among which a set $\Nc_D$ of size $n_D$ can be used to transmit data.
A code $\Bc$ is constructed by generating $2^{u}$ binary codewords of length $v n_D m$ denoted by $\bv$, where $v$ is the number of frames over which a codeword spreads.\footnote{A codeword is contained in a single frame in most practical systems.
However, for the sake of this proof, we allow a codeword to spread over multiple frames.}
Codewords are generated by independently and uniformly drawing bits.
A message $w \in \{1,\dots,2^{u}\}$ is encoded at the transmitter by mapping it to a codeword $\bv(w) \in \Bc$ in a manner known to the receiver.
For convenience, bits forming a codeword $\bv$ are indexed by $b_{l,k,i}$ to refer to the $i^{th}$ bit on the $k^{th}$ resource element of the $l^{th}$ frame.

On the receiver side, reconstruction of the transmitted codeword is achieved using for each bit $i$ and resource element $k$ a non-negative decoding metric that jointly operates on the $n_D$ data carrying resource elements of a frame of received symbols taken from the channel output alphabet.
This metric is denoted by $q_{k,i} \LB b, \yv \RB $ where $b$ is either $0$ or $1$, and $\yv$ is a vector of received data symbols of length $n_D$ corresponding to a single frame.
Decoding is performed by selecting the codeword $\widehat{w}$ such that
\begin{equation}
\widehat{w} = \argmax{w} \LB \prod_{l=1}^v \prod_{k \in \Nc_D} \prod_{i=1}^m q_{k,i}\LB b_{l,k,i}(w), \yv_l \RB \RB
\end{equation}
where $\yv_l$ is the vector of channel output symbols corresponding to the $l^{th}$ frame.
We introduce for convenience
\begin{equation}
\label{eq:qbar}
\bar{q}(\bv, \yv) \coloneqq \prod_{l=1}^v \prod_{k \in \Nc_D} \prod_{i=1}^m q_{k,i}\LB b_{l,k,i}, \yv_l \RB.
\end{equation}

Following the approach adopted in~\cite{bmi,arxiv.1205.1389}, we fix the transmitted message to $w$, the channel input $\mathbf{\mathcal{X}}$ to $\bv(w)$, and the channel output $\mathbf{\mathcal{Y}}$ to $\yv = \LSB \yv_1,\dots,\yv_v \RSB$, and compute the probability that a decoding error occurs, i.e.,
\begin{equation}
P \LB E | \mathbf{\mathcal{X}} = \bv(w), \mathbf{\mathcal{Y}} = \yv \RB =
P \LB \bigcup_{w' \neq w} \LSB \bar{q} \LB \bv(w'), \yv \RB \geq \bar{q} \LB \bv(w), \yv \RB \RSB \middle| \mathbf{\mathcal{X}} = \bv(w), \mathbf{\mathcal{Y}} = \yv \RB.
\end{equation}
Note that randomness comes from the random construction of the code $\Bc$.
We have
\begin{align}
P \LB E | \mathbf{\mathcal{X}} = \bv(w), \mathbf{\mathcal{Y}} = \yv \RB &\leq \sum_{w' \neq w} P \LB \bar{q} \LB \bv(w'), \yv \RB \geq \bar{q} \LB \bv(w), \yv \RB \middle| \mathbf{\mathcal{X}} = \bv(w), \mathbf{\mathcal{Y}} = \yv \RB \label{eq:ub}\\
	&\leq \frac{1}{\bar{q} \LB \bv(w), \yv \RB} \sum_{w' \neq w} \EE_{\Bc} \LP \bar{q} \LB \bv(w'), \yv \RB \middle| \mathbf{\mathcal{Y}} = \yv\RP \label{eq:markov}\\
	&\leq \frac{2^u}{\bar{q} \LB \bv(w), \yv \RB} \EE_{\bv} \LP \bar{q} \LB \bv, \yv \RB \middle| \mathbf{\mathcal{Y}} = \yv \RP \label{eq:iid_cw} \\
	&\leq 2^u \frac{\EE_{\bv} \LP \prod_{l=1}^v \prod_{k \in \Nc_D} \prod_{i=1}^m q_{k,i}\LB b_{l,k,i}, \yv_l \RB \middle| \mathbf{\mathcal{Y}} = \yv \RP}{\prod_{l=1}^v \prod_{k \in \Nc_D} \prod_{i=1}^m q_{k,i}\LB b_{l,k,i}(w), \yv_l \RB} \label{eq:def_qbar} \\
	&\leq 2^u \prod_{l=1}^v \prod_{k \in \Nc_D} \prod_{i=1}^m \frac{\EE_{b_{l,k,i}} \LP  q_{k,i}\LB b_{l,k,i}, \yv_l \RB \RP}{q_{k,i}\LB b_{l,k,i}(w), \yv_l \RB} \label{eq:ind_bit}
\end{align}
where~(\ref{eq:ub}) follows from the union bound,~(\ref{eq:markov}) follows from the Markov inequality,~(\ref{eq:iid_cw}) follows from the uniformly and independently drawn codewords,~(\ref{eq:def_qbar}) follows from~(\ref{eq:qbar}), and~(\ref{eq:ind_bit}) follows from the assumption that bits forming codewords are independently and uniformly drawn.
We can rewrite~(\ref{eq:ind_bit}) as
\begin{equation}
P \LB E | \mathbf{\mathcal{X}} = \bv(w), \mathbf{\mathcal{Y}} = \yv \RB \leq 2^{-v\LB T_v - R_c \RB} \label{eq:exp}
\end{equation}
where $R_c \coloneqq \frac{u}{v}$ is the rate [\si{\bit\per\frame}] and
\begin{align}
T_v(w,\yv) &\coloneqq \frac{1}{v}\sum_{l=1}^v \sum_{k \in \Nc_D} \sum_{i=1}^m \log{\frac{q_{k,i} \LB b_{l,k,i}(w), \yv_l \RB}{\EE_{b_{l,k,i}} \LP q_{k,i}\LB b_{l,k,i}, \yv_l \RB \RP}}\\
	&= n_Dm + \sum_{k \in \Nc_D} \sum_{i=1}^m \frac{1}{v} \sum_{l=1}^v \log{\frac{q_{k,i}\LB b_{l,k,i}(w), \yv_l \RB}{q_{k,i}\LB 0, \yv_l \RB + q_{k,i}\LB 1, \yv_l \RB}} \label{eq:Tv}.
\end{align}
One can notice that
\begin{equation}
Q_{k,i} \LB b | \yv_l \RB \coloneqq \frac{q_{k,i}\LB b, \yv_l \RB}{q_{k,i}\LB 0, \yv_l \RB + q_{k,i}\LB 1, \yv_l \RB}
\end{equation}
forms a probability mass function over the two possible outcomes of the $i^{th}$ bit carried by the $k^{th}$ resource element.
Because the channel process is Gaussian and weak-sense stationary, it is also strict-sense stationary~\cite[Theorem~4.2]{lindgren02}.
Moreover, because the covariance function~(\ref{eq:time_cov}) of the channel process converges to 0 as $\abs{i-k}$ goes to infinity, the channel process is also ergodic~\cite[Lemma~5.2]{lindgren02}.
The additive noises and codewords are \gls{iid} and therefore also are ergodic processes.
Therefore, we can apply the mean ergodic theorem~\cite[Theorem~5.4]{lindgren02}:
\begin{equation}
P \LB \lim_{v\to\infty} \frac{1}{v} \sum_{l=1}^v \log{Q_{k,i} \LB b_{l,k,i} | \yv_l \RB} = \EE_{\yv',b_{k,i}} \LP \log{Q_{k,i} \LB b_{k,i}| \yv' \RB} \RP \RB  = 1.
\end{equation}
Note that randomness comes from random drawing of $b_{l,k,i}$ and $\yv_l$.
Therefore, almost surely, for any positive real number $\delta$, there exists $v$ large enough such that
\begin{equation}
\frac{1}{v} \sum_{l=1}^v \log{Q_{k,i} \LB b_{l,k,i} | \yv_l \RB} \geq \EE_{\yv',b_{k,i}} \LP \log{Q_{k,i} \LB b_{k,i}| \yv' \RB} \RP - \frac{\delta}{n_Dm}
\end{equation}
which, when combined with~(\ref{eq:exp}) and~(\ref{eq:Tv}), leads almost surely to
\begin{equation}
P \LB E \RB \leq 2^{-v\LB n_Dm + \sum_{k \in \Nc_D} \sum_{i=1}^m \EE_{\yv',b_{k,i}} \LP  \log{Q_{k,i} \LB b_{k,i}| \yv' \RB} \RP - \delta - R_c \RB}.
\end{equation}
One can conclude from this result that
\begin{equation}
R \coloneqq n_Dm + \sum_{k \in \Nc_D} \sum_{i=1}^m \EE_{\yv',b_{k,i}} \LP \log{Q_{k,i} \LB b_{k,i}| \yv' \RB} \RP
\end{equation}
is an achievable rate, i.e., for any $R_c < R$, the probability of error can be made arbitrarily low.
$R$ can be rewritten as
\begin{equation}
R = \sum_{k \in \Nc_D} \sum_{i=1}^m I(B_{k,i}, \mathbf{\mathcal{Y}}) - \sum_{k \in \Nc_D} \sum_{i=1}^m \EE_{\yv'} \LP \text{D}_{\text{KL}}\LB P_{B_{k,i}}\LB \cdot | \yv' \RB || Q_{k,i}\LB \cdot |\yv' \RB \RB \RP
\end{equation}
where $P_{B_{k,i}}(\cdot|\yv)$ is the true posterior distribution on the $i^{th}$ bit transmitted on the $k^{th}$ resource element conditioned on $\yv$.

\end{appendices}

\bibliographystyle{IEEEtran}
\bibliography{IEEEabrv,references}

\end{document}